\theoremstyle{plain}
\numberwithin{equation}{section}
\newtheorem{thm}{Theorem}[section]
\newtheorem{lem}[thm]{Lemma}
\newtheorem{cor}[thm]{Corollary}
\newenvironment{exam}[1]
{\begin{flushleft}\textbf{Example #1}.\enspace}%
{\end{flushleft}}
\newcommand{\complex}{{\mathbb C}}
\newcommand{\real}{{\mathbb R}}
\newcommand{\positive}{{\mathbb N}}
\newcommand{\ascript}{{\mathcal A}}
\newcommand{\bscript}{{\mathcal B}}
\newcommand{\cscript}{{\mathcal C}}
\newcommand{\dscript}{{\mathcal D}}
\newcommand{\escript}{{\mathcal E}}
\newcommand{\iscript}{{\mathcal I}}
\newcommand{\pscript}{{\mathcal P}}
\newcommand{\rscript}{{\mathcal R}}
\newcommand{\sscript}{{\mathcal S}}
\newcommand{\ctimes}{\mathrel{\mathlarger\cdot}}
\newcommand{\rmprob}{\mathrm{Prob\,}}
\newcommand{\rmtr}{\mathrm{tr\,}}
\newcommand{\rmsh}{\mathrm{Sh\,}}
\newcommand{\ab}[1]{\left|#1\right|}
\newcommand{\doubleab}[1]{\left|\left|#1\right|\right|}
\newcommand{\brac}[1]{\left\{#1\right\}}
\newcommand{\paren}[1]{\left(#1\right)}
\newcommand{\sqbrac}[1]{\left[#1\right]}
\newcommand{\elbows}[1]{{\left\langle#1\right\rangle}}
\newcommand{\ket}[1]{{\left|#1\right>}}
\newcommand{\bra}[1]{{\left<#1\right|}}
\begin{document}

\title{CONTEXTS IN QUANTUM\\
MEASUREMENT THEORY}
\author{Stan Gudder\\ Department of Mathematics\\
University of Denver\\ Denver, Colorado 80208\\
sgudder@du.edu}
\date{}
\maketitle

\hskip 1.2pc\textit{This paper is dedicated to my colleague and friend, Paul Busch,}

\hskip 6pc\textit{who investigated quantum theory with heart.}
\bigskip

\begin{abstract}
State transformations in quantum mechanics are described by completely positive maps which are constructed from quantum channels. We call a finest sharp quantum channel a context. The result of a measurement depends on the context under which it is performed. Each context provides a viewpoint of the quantum system being measured. This gives only a partial picture of the system which may be distorted and in order to obtain a total accurate picture, various contexts need to be employed. We first discuss some basic definitions and results concerning quantum channels. We briefly describe the relationship between this work and ontological models that form the basis for contextuality studies. We then consider properties of channels and contexts. For example, we show that the set of sharp channels can be given a natural partial order in which contexts are the smallest elements. We also study properties of channel maps. The last section considers mutually unbiased contexts. These are related to mutually unbiased bases which have a large current literature. Finally, we connect them to completely random channel maps.
\end{abstract}

\section{Basic Definitions and Results}  
Quantum systems are usually described by operators on a complex separable Hilbert space $H$. Let $\bscript (H)$ denote the set of bounded linear operators on $H$ and $\sscript (H)$ the set of self-adjoint operators in $\bscript (H)$. For $A,B\in\bscript (H)$ we write $A\le B$ if
$\elbows{\phi ,A\phi}\le\elbows{\phi ,B\phi}$ for all $\phi\in H$ and if $A\ge 0$ we say that $A$ is \textit{positive}. We denote the set of positive operators by $\sscript ^+(H)$. If $0\le A\le I$ we call $A$ an \textit{effect} and denote the set of effects by $\escript (H)$. Effects describe
yes-no (two-valued) measurements that may be unsharp (fuzzy) \cite{bcl95,blm96,bgl97,cht70,gud982,hz12,kra83}. The sharp effects are given by projections satisfying $P^2=P$. We denote the set of projections on $H$ by $\pscript (H)$. If $P\in\pscript (H)$ is a one-dimensional projection onto the subspace of $H$ generated by a unit vector $\phi$, we write $P=P_\phi =\ket{\phi}\bra{\phi}$. We call $P_\phi$ (and $\phi$) a \textit{pure state}. If $\rho\in\sscript ^+(H)$ is of trace class with $\rmtr (\rho )=1$, we call $\rho$ a \textit{density operator} or
\textit{mixed state} and denote the set of density operators by $\dscript (H)$. If the system is described by a state $\rho$ and
$A\in\escript (H)$, then $\rmprob (A\mid\rho )=\rmtr (\rho A)$ is the \textit{probability that} $A$ \textit{occurs} (has value yes). In particular, for a pure state $P_\phi$ 
\begin{equation*}
\rmprob (A\mid P_\phi )=\rmtr (P_\phi A)=\elbows{\phi ,A\phi}
\end{equation*}
If $A\in\sscript (H)$, then $\elbows{\phi ,A\phi}$ is the \textit{expectation} of $A$ in the state $\phi$.

A \textit{quantum measurement} is described by a \textit{positive operator-valued measure} (POVM) \cite{bcl95,blm96,bgl97,cht70,hz12,kra83}. A POVM is a map $X$ from the Borel subsets $\bscript (\real )$ into $\escript (H)$ such that
$X (\emptyset )=0$, $X(\real )=I$ and if $\Delta _i\in\beta (\real )$, $i=1,2,\ldots$, satisfy $\Delta _i\cap\Delta _j=\emptyset$ for $i\ne j$, then
$X(\cup\Delta _i)=\sum X(\Delta _i)$ where the convergence of $\sum  X(\Delta _i)$ is in the strong operator topology. Quantum measurements are also called \textit{observables} and if $X(\Delta )\in\pscript (H)$ for all $\Delta\in\bscript (\real )$, then $X$ is a
\textit{sharp observable}. It follows from the spectral theorem that there is a one-to-one correspondence between sharp observables and elements of $\sscript (H)$. If $X$ is a POVM and $\rho\in\dscript (H)$ then the probability that $X$ has a value in $\Delta$ becomes
\begin{equation}                
\label{eq11}
\rmprob (\Delta\mid X,\rho )=\rmtr\sqbrac{\rho X(\Delta )}
\end{equation}

If $M_i\in\bscript (H)$ with $M_i\ne 0$ and $\sum M_i^*M_i=\sum M_iM_i^*=I$, we call $\ascript =\brac{M_i}$ a \textit{unital channel} with
\textit{branches} $M_i$ \cite{blm96,cho75,hz12,kra83,nc00}. We then have the \textit{channel map} $L_\ascript\colon\bscript (H)\to\bscript (H)$ given by
\begin{equation*}
L_\ascript (A)=\sum M_i^*AM_i
\end{equation*}
(Our terminology is a little different from what is frequently used, but we think it is more descriptive. What is usually called a channel, we call a channel map.) Notice that $L_\ascript$ is linear, $L_\ascript (I)=I$, $L_\ascript (A^*)=L_\ascript (A)^*$ and $A\le B$ implies that
$L_\ascript (A)\le L_\ascript (B)$. It follows that if $A\in\sscript (H)$ then $L_\ascript (A)\in\sscript (H)$, if $A\in\sscript ^+(H)$ then
$L_\ascript (A)\in\sscript ^*(H)$ and if $A\in\escript (H)$ then $L_\ascript (A)\in \escript (H)$. Also, if $A$ is trace finite then
$\rmtr\sqbrac{L_\ascript (A)}=\rmtr (A)$. Hence, if $\rho\in\dscript (H)$, then $L_\ascript (\rho )\in\dscript (H)$. Maps of the form $L_\ascript$ are called \textit{completely positive} and a restriction of $L_\ascript$ to $\dscript (H)$ is called a \textit{state transformation} \cite{bbrv01,kra83,nc00}. The simplest type of state transformation is $\Phi (\rho )=U^*\rho U$, where $U$ is a unitary operator. In general, if $P\in\pscript (H)$ then $L_\ascript (P)\notin\pscript (H)$ and $L_\ascript$ is not injective except in the unitary case. Also, if $AB=BA$, then
$L_\ascript (A)L_\ascript (B)\ne L_\ascript (B)L_\ascript (A)$ in general.

If $\ascript$ is a channel and $X\colon\bscript (\real )\to\escript (H)$ is a POVM, then it is easy to check that $L_\ascript\circ X$ is again a POVM. Two POVM's $X,Y$ \textit{coexist} if there exists a POVM $Z\colon\bscript (\real )\times\bscript (\real )\to\escript (H)$ whose marginals are $X$ and $Y$ \cite{bcl95,blm96,bgl97,hz12}. That is, $Z(\Delta\times\real )=X(\Delta )$ and $Z(\real\times\Delta )=Y(\Delta )$ for all
$\Delta\in\bscript (\real )$. In this case, we call $Z$ the \textit{joint measurement} for $X$ and $Y$. If $\ascript$ is a channel and $X,Y$ coexist with joint measurement $Z$, then it is easy to check that $L_\ascript\circ X$, $L_\ascript\circ Y$ coexist with joint measurement
$L_\ascript\circ Z$. Thus, although channel maps need not preserve commutativity, they do preserve coexistence.

If $\rho\in\dscript (H)$ is a state, $X\colon\bscript (\real )\to\escript (H)$ is a measurement and $\Phi\colon\dscript (H)\to\dscript (H)$ is a state transformation, the probability that $X$ has a value in $\Delta\in\bscript (\real )$ when $\Phi$ is performed is the generalization of \eqref{eq11} given by
\begin{equation}                
\label{eq12}
\rmprob (\Delta\mid\rho ,\Phi ,X)=\rmtr\sqbrac{\Phi (\rho )X(\Delta )}
\end{equation}
If $\Phi =L_\ascript$ where $\ascript =\brac{M_j}$ and $\brac{\phi _i}$ is an orthonormal basis for $H$, then \eqref{eq12} becomes
\begin{align}                
\label{eq13}
\rmprob (\Delta\mid\rho,\Phi ,X)&=\sum _j\rmtr\sqbrac{M_j^*\rho M_jX(\Delta )}\notag\\
   &=\sum _{i,j}\elbows{\phi _i,M_j^*\rho M_jX(\Delta )\phi _i}
\end{align}

A channel $\ascript =\brac{P_i}$ is \textit{sharp} if $P_i\in\pscript (H)$, $i=1,2,\ldots\,$. We then have that $\sum P_i=I$. It follows that $P_iP_j=0$ for $i\ne j$ and hence, $L_\ascript\circ L_\ascript =L_\ascript$. We denote the set of sharp channels by $\rmsh (H)$. A \textit{context} is a channel $\ascript =\brac{P_i}\in\rmsh (H)$ for which $P_i$ is a one-dimensional projection. We conclude that
$P_i=P_{\phi _i}$ where $\brac{\phi _i}$ is an orthonormal basis for $H$. Conversely, given an orthonormal basis $\brac{\phi _i}$ for $H$, we have the context $\brac{P_{\phi _i}}$. If $\ascript$ is a context, we call $L_\ascript$ the \textit{context map} for $\ascript$. We say that
$A\in\bscript (H)$ is \textit{measurable} with respect to a channel $\ascript =\brac{M_i}$ if $AM_i=M_iA$ for all $i=1,2,\ldots\,$. If $\ascript$ is sharp, it is clear that $L_\ascript (A)$ is measurable with respect to $\ascript$. This need not hold if $\ascript$ is unsharp. We leave the simple proof of the following theorem to the reader.

\begin{thm}    
\label{thm11}
{\rm (a)}\enspace If $A$ is measurable with respect to $\ascript$, then $L_\ascript (A)=A$. Conversely, if $L_\ascript (A)=A$ and $\ascript$ is sharp, then $A$ is measurable with respect to $\ascript$.
{\rm (b)}\enspace If $\ascript =\brac{P_{\phi _i}}$ is a context, then the following statements hold.
{\rm (i)}\enspace $L_\ascript (A)=\sum\elbows{\phi _i,A\phi _i}P_{\phi _i}$ for every
$A\in\bscript (H)$.
{\rm (ii)}\enspace $L_\ascript (A)L_\ascript (B)=L_\ascript (B)L_\ascript (A)$ for every $A,B\in\bscript (H)$.
{\rm (iii)}\enspace $L_\ascript (A)$ has pure point spectrum with eigenvalues $\elbows{\phi _i,A\phi _i}$ and corresponding eigenvectors
$\phi _i$, $i=1,2,\ldots\,$.
\end{thm}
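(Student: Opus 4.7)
The plan is to address each clause by direct manipulation of the defining identities. I begin with part (a): for the forward direction, if $AM_i = M_iA$, then each term $M_i^*AM_i$ rewrites as $M_i^*M_iA$ by pulling $A$ out to the right, and summing over $i$ collapses via $\sum M_i^*M_i = I$ to give $L_\ascript(A) = A$. For the converse under sharpness, I use the orthogonality $P_iP_j = 0$ for $i\ne j$ noted in the text. Starting from $\sum_i P_iAP_i = A$, multiplying on the left by $P_j$ isolates the $i=j$ term to yield $P_jAP_j = P_jA$, and multiplying on the right by $P_j$ yields $P_jAP_j = AP_j$. Combining these gives $P_jA = AP_j$ for every $j$, which is measurability of $A$ with respect to $\ascript$.

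For part (b)(i), since $P_{\phi_i} = \ket{\phi_i}\bra{\phi_i}$ is rank-one, the sandwich $P_{\phi_i}AP_{\phi_i}$ factors as $\elbows{\phi_i,A\phi_i}P_{\phi_i}$; summing over $i$ and recalling $P_{\phi_i}^*=P_{\phi_i}$ yields the claimed formula. Part (b)(ii) follows immediately from (i) and the orthogonality of distinct $P_{\phi_i}$: both $L_\ascript(A)L_\ascript(B)$ and $L_\ascript(B)L_\ascript(A)$ collapse to the diagonal sum $\sum_i \elbows{\phi_i,A\phi_i}\elbows{\phi_i,B\phi_i}P_{\phi_i}$, which is symmetric in $A$ and $B$. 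For part (b)(iii), I apply the formula from (i) to a basis vector $\phi_j$ and use $P_{\phi_i}\phi_j = 0$ for $i\ne j$ together with $P_{\phi_j}\phi_j=\phi_j$ to obtain $L_\ascript(A)\phi_j = \elbows{\phi_j,A\phi_j}\phi_j$; since $\brac{\phi_i}$ is an orthonormal basis of $H$, this exhibits the complete eigenvector decomposition and the pure-point spectrum.

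The main point of care, rather than a true conceptual obstacle, is the handling of the possibly infinite sums. I would interpret $\sum M_i^*AM_i$ and $\sum \elbows{\phi_i,A\phi_i}P_{\phi_i}$ as strong-operator limits, matching the convention already adopted in the paper for POVM additivity. The manipulations above require passing a fixed bounded operator inside a strong-operator limit; left and right multiplication by a bounded operator are both continuous in the strong operator topology, so extracting $P_jAP_j=P_jA$ and $P_jAP_j=AP_j$ in the converse of (a), as well as the product computation in (b)(ii), remain valid. Beyond this bookkeeping there is no substantive difficulty, which is consistent with the author's remark that the proof is simple and left to the reader.
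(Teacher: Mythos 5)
Your proof is correct, and since the paper explicitly leaves this proof to the reader there is nothing to compare it against; your argument is precisely the routine computation the author intends, including the right use of $\sum M_i^*M_i=I$, the orthogonality $P_iP_j=0$ for $i\ne j$, and the rank-one collapse $P_{\phi_i}AP_{\phi_i}=\elbows{\phi_i,A\phi_i}P_{\phi_i}$. Your care with the strong-operator limits (separate SOT-continuity of left and right multiplication by a fixed bounded operator) is exactly the right amount of bookkeeping for the infinite-dimensional case.
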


Since the branches of a context $\ascript$ have rank~1, $\ascript$ corresponds to a finest sharp measurement channel which we shall discuss in more detail in Section~2. A context provides a view of the physical system described by $H$. In general this is only a partial view that can be distorted so a complete picture would require various contexts. (We shall discuss mutually unbiased contexts in Section~3.) Thus, if
$A\in\sscript (H)$ represents a sharp observable, the $L_\ascript (A)$ is that observable from the viewpoint of context $\ascript$. In general, $L_\ascript (A)$ does not provide an accurate description unless $A$ is measurable with respect to $\ascript$. For example, by
Theorem~\ref{thm11}, $L_\ascript (A)L_\ascript (B)=L_\ascript (B)L_\ascript (A)$ for all $A,B\in\bscript (H)$ so a single context cannot describe quantum interference. In particular, it is easy to show that $A,B\in\sscript (H)$ with pure point spectra commute if and only if $A$ and $B$ are both measurable with respect to a single context. Other examples are: if $P\in\pscript (H)$ then $L_\ascript (P)$ need not be in $\pscript (H)$ and if $P_\phi$ is a pure state, then $L_\ascript (P_\phi )$ may be mixed.

Letting $2^\positive$ be the power set on $\positive =\brac{1,2,\ldots}$ we obtain the measurable space $(\positive ,2^\positive )$. Let
$\ascript =\brac{P_{\phi _i}}$ be a fixed context. Corresponding to a state $\rho\in\dscript (H)$, the operator
$L_\ascript (\rho )=\sum\elbows{\phi _i,\rho\phi _i}P_{\phi _i}$, provides a probability measure $\mu _{\ascript ,\rho}$ on $\positive$ given by
$\mu _{\ascript ,\rho}(i)=\elbows{\phi _i,\rho\phi _i}$. Corresponding to measurement $X$, we have
\begin{equation*}
L_\ascript (\Delta )=\sum\elbows{\phi _i,X(\Delta )\phi _i}P_{\phi _i}
\end{equation*}
This gives a random variable $f_{\ascript ,X(\Delta )}$ on $\positive$ given by
\begin{equation*}
f_{\ascript ,X(\Delta )}=\elbows{\phi _i,X(\Delta )\phi _i}
\end{equation*}
In fuzzy probability theory, $f_{\ascript ,X(\Delta )}$ is called a \textit{fuzzy event} and $\Delta\to f_{\ascript ,X(\Delta )}$ is a
\textit{fuzzy observable} \cite{gud981,gud982}. The probability that $X$ has a value in $\Delta$ according to context $\ascript$ becomes
\begin{align}                
\label{eq14}
\rmprob _\ascript (\Delta\mid\rho,X)&=\sum _i\mu _{\ascript ,\rho}(i)f_{\ascript ,X(\Delta )}(i)
     =\sum _i\elbows{\phi _i,\rho\phi _i}\elbows{\phi _i,X(\Delta )\phi _i}\notag\\
     &=\sum _i\elbows{\phi _i,\rho P_{\phi _i}X(\Delta )\phi _i}=\rmtr\sqbrac{L_\ascript (\rho )X(\Delta )}\notag\\
     &=\rmtr\sqbrac{\rho L_\ascript\paren{X(\Delta )}}=\rmtr\sqbrac{L_\ascript (\rho )L_\ascript\paren{X(\Delta )}}
 \end{align}
According to \eqref{eq11}, this does not agree with the usual quantum probability given by
\begin{equation*}
\rmprob (\Delta \mid\rho ,X)=\rmtr\sqbrac{\rho X(\Delta )}=\sum\elbows{\phi _i,\rho X(\Delta )\phi _i}
\end{equation*}
In fact, they agree only if $\rho$ or $X(\Delta )$ is measurable with respect to $\ascript$.

If $\Phi$ is a state transformation given by $\Phi (\rho )=\sum M_i^*\rho M_i$ and $\ascript =\brac{P_{\phi _i}}$ is a context, we define a random matrix on $\positive\times\positive$ by
\begin{equation*}
M_{\ascript ,\Phi (\rho )}(i,j)=\elbows{M_j\phi _i,\rho M_j\phi _i}=\elbows{\phi _i,M_j^*\rho M_j\phi _i}
\end{equation*}
We interpret $M_{\ascript ,\Phi (\rho )}(i,j)$ as the probability that $\rho$ traverses branch $j$ according to state $\phi _i$. We see that
\begin{align*}
\sum _iM_{\ascript ,\Phi (\rho )}(i,j)&=\rmtr (M_j^*\rho M_j)\\
\sum _jM_{\ascript ,\Phi (\rho )}(i,j)&=\mu _{\ascript ,\Phi (\rho )}(i)
\end{align*}
and of course $\sum _{i,j}M_{\ascript ,\Phi (\rho )}(i,j)=1$. As in \eqref{eq14}, according to $\ascript$, the probability that $X$ has a value in
$\Delta$ when the system is in the state $\Phi (\rho )$ becomes
\begin{align*}
\rmprob _\ascript (\Delta\mid\rho ,\Phi ,X)&=\sum _{i,j}M_{\ascript ,\Phi (\rho )}(i,j)\elbows{\phi _i,X(\Delta )\phi _i}\\
    &=\rmtr\sqbrac{L_\ascript\paren{X(\Delta )}\Phi (\rho )}=\rmtr\sqbrac{X(\Delta )L_\ascript\paren{\Phi (\rho )}}\\
    &=\rmtr\sqbrac{L_\ascript\paren{X(\Delta )}L_\ascript\paren{\Phi (\rho )}}
\end{align*}
Again, this is not the same as the quantum probability \eqref{eq12} and they agree only if $X(\Delta )$ or $\Phi (\rho )$ are measurable with respect to $\ascript$.

Corresponding to a collection of contexts $\Gamma =\brac{\ascript}$, we have the corresponding collection of measures
$\mu _{\ascript ,\rho}$, random variables $f_{\ascript ,X(\Delta )}$ and random matrices $M_{\ascript ,\Phi (\rho )}$ on the measurable space
$(\positive ,2^\positive )$. We call this set of elements an \textit{ontological model} \cite{gudap,lwe18} for the physical system. These models form the basis for contextuality studies in the current literature \cite{gudap}.

\section{Properties of Contexts} 
We denote the set of rank~1 projections on $H$ by $\pscript _1(H)$. The next result differentiates contexts from among the sharp channels.
\begin{lem}    
\label{lem12}
If $\ascript\in\rmsh (H)$, then $\ascript$ is a context if and only if $L_\ascript (P)L_\ascript (Q)=L_\ascript (Q)L_\ascript (P)$ for every
$P,Q\in\pscript _1(H)$.
\end{lem}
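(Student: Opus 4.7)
The plan is to handle the two directions separately, with the forward direction following immediately from the earlier theorem, and the reverse direction proved by contrapositive using explicit rank-one projections in the range of an over-large branch.

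For the forward implication, if $\ascript$ is a context, then Theorem~\ref{thm11}(b)(ii) tells us that $L_\ascript(A)L_\ascript(B) = L_\ascript(B)L_\ascript(A)$ for every $A,B \in \bscript(H)$, so in particular for every pair of rank-one projections $P,Q \in \pscript_1(H)$. Nothing more is needed here.

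For the reverse implication, I would argue by contrapositive. Suppose $\ascript = \{P_i\} \in \rmsh(H)$ fails to be a context; since each $P_i \ne 0$, this means some branch $P_k$ has rank at least $2$. The key computational step is to observe that if $\phi$ is a unit vector with $\phi \in P_k H$, then $P_i\phi = 0$ for $i \ne k$ and $P_k\phi = \phi$, so
\begin{equation*}
L_\ascript(P_\phi) = \sum_i P_i P_\phi P_i = \sum_i \ket{P_i\phi}\bra{P_i\phi} = \ket{\phi}\bra{\phi} = P_\phi.
\end{equation*}
Now choose an orthonormal pair $e_1, e_2$ inside $P_k H$ and set $\phi = e_1$, $\psi = (e_1 + e_2)/\sqrt{2}$, so that $\phi$ and $\psi$ are neither parallel nor orthogonal. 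Then $L_\ascript(P_\phi)L_\ascript(P_\psi) = P_\phi P_\psi = \elbows{\phi,\psi}\ket{\phi}\bra{\psi}$ and $L_\ascript(P_\psi)L_\ascript(P_\phi) = \elbows{\psi,\phi}\ket{\psi}\bra{\phi}$; because $\elbows{\phi,\psi} \ne 0$ while $\phi$ and $\psi$ are linearly independent, these two rank-one operators have different ranges and hence cannot be equal. This exhibits a violation of the commutation condition at the pair $(P,Q) = (P_\phi, P_\psi) \in \pscript_1(H)\times\pscript_1(H)$, completing the contrapositive.

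The only subtlety I expect to navigate carefully is verifying that the chosen $\phi, \psi$ actually live in the range of $P_k$, which is automatic once we pick $e_1, e_2$ orthonormal inside $P_k H$ (a two-dimensional subspace is available precisely because $\mathrm{rank}(P_k) \ge 2$). Everything else reduces to the collapse formula $L_\ascript(P_\phi) = P_\phi$, which is where the sharpness and orthogonality of the branches do all the work; no appeal to the structure of the other branches beyond $\sum P_i = I$ with $P_iP_j = 0$ for $i\ne j$ is needed.
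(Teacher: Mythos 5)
Your proposal is correct and rests on the same key computation as the paper's proof, namely that $L_\ascript (P_\phi )=P_iP_\phi P_i=P_\phi$ for a unit vector $\phi$ in the range of a branch $P_i$, so that commutativity of $L_\ascript (P_\phi )L_\ascript (P_\psi )$ reduces to comparing $\elbows{\phi ,\psi}\ket{\phi}\bra{\psi}$ with $\elbows{\psi ,\phi}\ket{\psi}\bra{\phi}$. The only difference is organizational: the paper argues directly that commutativity forces any two non-orthogonal unit vectors in a branch range to be parallel, while you take the contrapositive and exhibit an explicit witness pair $e_1$, $(e_1+e_2)/\sqrt{2}$; both are sound.
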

\begin{proof}
By Theorem~\ref{thm11}, if $\ascript$ is a context, then $L_\ascript (P)L_\ascript (Q)=L_\ascript (Q)L_\ascript (P)$ for every
$P,Q\in\pscript _1(H)$. To prove the converse, suppose that $\ascript =\brac{P_i}\in\rmsh (H)$ and
$L_\ascript (P)L_\ascript (Q)=L_\ascript (Q)L_\ascript (P)$ for every $P,Q\in\pscript _1(H)$. Let $\phi ,\psi$ be unit vectors in the range of
$P_i$ for a particular $i$ and suppose that $\elbows{\phi ,\psi}\ne  0$. Letting $P=P_\phi$ and $Q=P_\psi$ and using the fact that
$P_iP_j=0$ for $j\ne i$ we have
\begin{align*}
\elbows{ \phi ,\psi}\ket{\phi}\bra{\psi}&=P_i\ket{\phi}\bra{\phi}P_i\ket{\psi}\bra{\psi}P_i=P_iPP_iQP_i\\
    &=\sum _iP_iPP_i\sum _jP_jQP_j=L_\ascript (P)L_\ascript (Q)\\
    &=L_\ascript (Q)L_\ascript (P)=P_iQP_iPP_i=\elbows{\psi ,\phi}\ket{\psi}\bra{\phi}
\end{align*}
Operate on $\psi$ with both sides to obtain
\begin{equation*}
\elbows{\phi ,\psi}\phi =\ab{\elbows{\psi ,\phi}}^2\psi
\end{equation*}
We conclude that $\psi =c\phi$ where $c\in\complex$. This implies that $P_i\in\pscript _1(H)$ for all $i$ so $\ascript$ is a context.
\end{proof}

The next result shows that contexts distinguish different operators.

\begin{lem}    
\label{lem22}
If $A,B\in\bscript (H)$ and $L_\ascript (A)=L_\ascript (B)$ for every context $\ascript$, then $A=B$.
\end{lem}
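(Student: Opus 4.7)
The plan is to pass to the difference $C = A-B$, show that $C$ vanishes as a sesquilinear form on every unit vector, and then appeal to the standard fact that in a complex Hilbert space this forces $C=0$.

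First I would use linearity of $L_\ascript$ to rewrite the hypothesis as $L_\ascript(C) = 0$ for every context $\ascript$, where $C = A-B$. Then, given an arbitrary orthonormal basis $\{\phi_i\}$ of $H$, the corresponding context $\ascript = \{P_{\phi_i}\}$ is well-defined, and Theorem~\ref{thm11}(b)(i) gives
\begin{equation*}
0 = L_\ascript(C) = \sum_i \elbows{\phi_i, C\phi_i} P_{\phi_i}.
\end{equation*}
Applying both sides to $\phi_j$ and using $P_{\phi_i}\phi_j = \delta_{ij}\phi_j$ yields $\elbows{\phi_j, C\phi_j} = 0$ for every index $j$.

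Since every unit vector $\phi\in H$ can be extended to an orthonormal basis of $H$, the above conclusion applies to every unit vector, so $\elbows{\phi, C\phi} = 0$ for all $\phi\in H$. By the polarization identity in the complex Hilbert space $H$, namely
\begin{equation*}
4\elbows{\phi, C\psi} = \sum_{k=0}^{3} i^k \elbows{\phi + i^k\psi, C(\phi + i^k\psi)},
\end{equation*}
it follows that $\elbows{\phi, C\psi} = 0$ for every $\phi,\psi\in H$, hence $C = 0$ and $A = B$.

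The proof is essentially routine once one invokes Theorem~\ref{thm11}(b)(i); the only mild subtlety is the last step, where one must use that the Hilbert space is complex (as is assumed throughout the paper) in order for the diagonal of the sesquilinear form to determine the operator.
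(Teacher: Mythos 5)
Your proof is correct and follows essentially the same route as the paper: extend an arbitrary unit vector to an orthonormal basis, apply the resulting context to deduce that the diagonal matrix elements of $A-B$ vanish for every unit vector, and conclude that $A=B$. The only difference is that you prove the final step explicitly via polarization (note that with the paper's physics-style inner product your identity actually yields $4\elbows{\psi,C\phi}$ rather than $4\elbows{\phi,C\psi}$, which changes nothing), whereas the paper simply cites Proposition~1.21 of \cite{hz12} for the fact that a bounded operator on a complex Hilbert space with vanishing quadratic form is zero.
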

\begin{proof}
Let $\phi _1\in H$ with $\doubleab{\phi _1}=1$ and let $\phi _1,\phi _2,\ldots$ be an orthonormal basis for $H$. Form the  context
$\ascript =\brac{P_{\phi _i}}$. Assuming that $L_\ascript (A)=L_\ascript (B)$ we have
\begin{equation*}
\sum\elbows{\phi _i,A\phi _i}P_{\phi _i}=\sum\elbows{\phi _i,B\phi _i}P_{\phi _i}
\end{equation*}
Multiplying by $P_{\phi _1}$ gives $\elbows{\phi _1,A\phi _1}P_{\phi _1}=\elbows{\phi _1,B\phi _1}P_{\phi _1}$ and it follows that
$\elbows{\phi _1,A\phi _1}=\elbows{\phi _1,B\phi _1}$. We conclude that $\elbows{\phi ,A\phi}=\elbows{\phi ,B\phi}$ for all $\phi\in H$. It follows from Proposition~1.21 \cite{hz12} that $A=B$.
\end{proof}

\begin{lem}    
\label{lem23}
If $\ascript =\brac{P_{\phi _i}}$, $\bscript =\brac{P_{\psi _i}}$ are contexts and $A\in\bscript (H)$, then $L_\ascript (A)=L_\bscript (A)$ if and only if $\elbows{\phi _j,A\phi _j}=\elbows{\psi _k,A\psi _k}$ whenever $\elbows{\psi _k,\phi _j}\ne 0$.
\end{lem}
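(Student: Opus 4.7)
My plan is to reduce everything to the explicit formula $L_\ascript (A)=\sum _j\elbows{\phi _j,A\phi _j}P_{\phi _j}$ supplied by Theorem~\ref{thm11}(b)(i), together with the eigenvalue/eigenvector description in part~(iii). In these terms the claim becomes an assertion about when two diagonal-in-different-bases operators coincide, which I will verify by applying the operators to one of the bases and expanding in the other.

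For the forward direction I would set $T:=L_\ascript (A)=L_\bscript (A)$ and expand $\phi _j=\sum _k\elbows{\psi _k,\phi _j}\psi _k$. Computing $T\phi _j$ two ways using Theorem~\ref{thm11}(b)(iii) — once as $L_\ascript (A)\phi _j=\elbows{\phi _j,A\phi _j}\phi _j$ and once termwise as $L_\bscript (A)\phi _j=\sum _k\elbows{\psi _k,A\psi _k}\elbows{\psi _k,\phi _j}\psi _k$ — and then matching $\psi _k$-coefficients gives $\elbows{\psi _k,\phi _j}\elbows{\psi _k,A\psi _k}=\elbows{\psi _k,\phi _j}\elbows{\phi _j,A\phi _j}$. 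Dividing through whenever $\elbows{\psi _k,\phi _j}\ne 0$ yields the claimed equality of expectations.

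For the converse I would run this computation backwards: for each $\phi _j$,
\begin{equation*}
L_\bscript (A)\phi _j=\sum _k\elbows{\psi _k,A\psi _k}\elbows{\psi _k,\phi _j}\psi _k,
\end{equation*}
and in each summand either $\elbows{\psi _k,\phi _j}=0$ (so the term vanishes) or the hypothesis lets me replace $\elbows{\psi _k,A\psi _k}$ by $\elbows{\phi _j,A\phi _j}$. Factoring this scalar out of the sum collapses it back to $\elbows{\phi _j,A\phi _j}\phi _j=L_\ascript (A)\phi _j$. Since the two bounded operators then agree on the orthonormal basis $\brac{\phi _j}$, they are equal on all of $H$. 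The only mildly delicate point, which I would note but not belabor, is the legitimacy of applying $L_\bscript (A)$ termwise to the basis expansion of $\phi _j$; this is immediate from bounded linearity together with strong operator convergence of $\sum _k\elbows{\psi _k,A\psi _k}P_{\psi _k}$ from Theorem~\ref{thm11}(b)(i). There is no real obstacle beyond this routine verification.
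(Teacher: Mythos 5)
Your proof is correct and is essentially the paper's own argument recast in vector form: computing $\elbows{\psi _k,L_\ascript (A)\phi _j}$ versus $\elbows{\psi _k,L_\bscript (A)\phi _j}$ and cancelling the factor $\elbows{\psi _k,\phi _j}$ when it is nonzero is exactly what the paper does by sandwiching the operator identity between $Q_k$ on the left and $P_j$ on the right. Your converse is, if anything, more explicit than the paper's ``the converse results from reversing these steps,'' but it is the same reversal.
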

\begin{proof}
Let $P_i=P_{\phi _i}$, $Q_i=P_{\psi _i}$ and suppose that $L_\ascript (A)=L_\bscript (A)$. We then have
\begin{equation*}
\sum\elbows{\phi _i,A\phi _i}P_i=\sum\elbows{\psi _i,A\psi _i}Q_i
\end{equation*}
Multiplying by $P_j$ on the right and $Q_k$ on the left gives
\begin{equation*}
\elbows{\phi _j,A\phi _j}Q_kP_j=\elbows{\psi _k,A\psi _k}Q_kP_j
\end{equation*}
It follows that
\begin{equation*}
\elbows{\phi _j,A\phi _j}\elbows{\psi _k,\phi _j}\ket{\psi _k}\bra{\phi _j}=\elbows{\psi _k,A\psi _k}\elbows{\psi _k,\phi _j}\ket{\psi_k}\bra{\phi _j}
\end{equation*}
If $\elbows{\psi _k,\phi _j}\ne 0$ we conclude that $\elbows{\phi _j,A\phi _j}=\elbows{\psi _k,A\psi _k}$. The converse result from reversing these steps.
\end{proof}

If $\ascript =\brac{M_i}$, $\bscript =\brac{N_j}$ are channels, it is easy to check that $\ascript\bscript =\brac{M_iN_j}$ is again a channel. We can then define the channel map
\begin{equation*}
L_{\ascript\bscript}(A)=\sum _{i,j}(M_iN_j)^*AM_iN_j=\sum _{i,j}N_j^*M_i^*AM_iN_j=L_\ascript\sqbrac{L_\bscript (A)}
\end{equation*}
Hence, $L_{\ascript\bscript}=L_\ascript L_\bscript$. Notice that if $\ascript\in\rmsh (H)$, then $\ascript\ascript =\ascript$ and
$L_\ascript L_\ascript =L_\ascript$. Calling $\iscript =\brac{I}$ the \textit{identity channel} we have that
$\iscript\ascript =\ascript\iscript =\ascript$ and $L_\ascript L_\iscript =L_\iscript L_\ascript =L_\ascript$ for every channel $\ascript$. Of course, $\ascript\bscript\ne\bscript\ascript$ in general.

\begin{thm}    
\label{thm24}
{\rm (a)}\enspace If $\ascript ,\bscript\in\rmsh (H)$, then $\ascript\bscript =\bscript\ascript$ if and only if $PQ=QP$ for all $P\in\ascript$,
$Q\in\bscript$.
{\rm (b)}\enspace If $\ascript ,\bscript$ are contexts, then $\ascript\bscript =\bscript\ascript$ if and only if $\ascript =\bscript$.
\end{thm}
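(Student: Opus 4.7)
The plan is to prove (a) first and then derive (b) by combining (a) with a rank-one argument. Write $\ascript = \brac{P_i}$ and $\bscript = \brac{Q_j}$, so each collection is a family of pairwise orthogonal projections summing to $I$. The direction $(\Leftarrow)$ of (a) is immediate: if $P_iQ_j = Q_jP_i$ for every $i,j$, then the branch $P_iQ_j$ of $\ascript\bscript$ coincides with the corresponding branch $Q_jP_i$ of $\bscript\ascript$ after the obvious relabeling of the index pair, so $\ascript\bscript = \bscript\ascript$.

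For the forward direction of (a), I would fix $(i,j)$ with $P_iQ_j \ne 0$ and use the equality of channels to write $P_iQ_j = Q_{j'}P_{i'}$ for some indices $(i',j')$. The key step is a sandwich argument built on the orthogonality $P_iP_k = \delta_{ik}P_i$ and $Q_jQ_k = \delta_{jk}Q_j$ of the sharp branches. Multiplying the identity on the left and then on the right by $P_i$ reduces it to $P_iQ_jP_i = \delta_{i'i}P_iQ_{j'}P_i$. If $i' \ne i$ the right side is $0$, but $P_iQ_jP_i = (P_iQ_j)(P_iQ_j)^*$, which would force $P_iQ_j = 0$, a contradiction; hence $i' = i$ and $P_iQ_j = Q_{j'}P_i$. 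Applying the same sandwich with $Q_j$ on both sides of this new identity yields $Q_jP_iQ_j = \delta_{jj'}Q_jP_iQ_j$, and the same positivity argument forces $j' = j$. Therefore $P_iQ_j = Q_jP_i$, and the trivial case $P_iQ_j = 0$ needs no argument.

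For (b), the direction $(\Leftarrow)$ is immediate because $\ascript\ascript = \ascript$ for every sharp channel. For $(\Rightarrow)$, part (a) applied to $\ascript = \brac{P_{\phi_i}}$ and $\bscript = \brac{P_{\psi_j}}$ gives $P_{\phi_i}P_{\psi_j} = P_{\psi_j}P_{\phi_i}$ for all $i,j$, that is,
\begin{equation*}
\elbows{\phi_i,\psi_j}\ket{\phi_i}\bra{\psi_j} = \elbows{\psi_j,\phi_i}\ket{\psi_j}\bra{\phi_i}.
\end{equation*}
Evaluating both sides on $\psi_j$ shows that whenever $\elbows{\phi_i,\psi_j} \ne 0$ the vector $\phi_i$ is a scalar multiple of $\psi_j$, so $P_{\phi_i} = P_{\psi_j}$. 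Since $\brac{\psi_j}$ is an orthonormal basis, each $\phi_i$ is nonorthogonal to at least one $\psi_j$, and orthonormality of the basis makes this $j$ unique; the resulting bijection identifies the two contexts.

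The main obstacle is the forward direction of (a): from the bare equality of the channels $\ascript\bscript$ and $\bscript\ascript$, I must conclude that branches match via the obvious pairing $(i,j) \leftrightarrow (j,i)$ rather than some permuted pairing. The two-sided sandwich, applied once with $P_i$ and then with $Q_j$, is the rigidity device that uses orthogonality to rule out any such permutation and pin down the commutation $P_iQ_j = Q_jP_i$ index by index.
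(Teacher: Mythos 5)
Your proposal is correct and follows essentially the same route as the paper: for (a) you match a branch $P_iQ_j$ with some $Q_{j'}P_{i'}$ and use a projection sandwich together with the positivity fact $TT^*=0\Rightarrow T=0$ to force $i'=i$ and $j'=j$ (the paper organizes this as two cases and invokes self-adjointness of $PQ_1P$, but the mechanism is the same), and for (b) you use (a) plus the rank-one evaluation on $\psi_j$ exactly as the paper does. No gaps.
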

\begin{proof}
(a)\enspace If $PQ=QP$ for all $P\in\ascript$, $Q\in\bscript$, then clearly $\ascript\bscript =\bscript\ascript$. Conversely, suppose that
$\ascript\bscript =\bscript\ascript$. If $P\in\ascript$ and $Q\in\bscript$, there exist $P_1\in\ascript$, $Q_1\in\bscript$ such that
$PQ=Q_1P_1$. If $P\ne P_1$, then we must have $P_1P=0$. Multiplying by $P$ on the right gives $PQP=0$. Hence, for every $\phi\in H$ we have
\begin{equation*}
\doubleab{QP\phi}^2=\elbows{QP\phi ,QP\phi}=\elbows{PQP\phi ,\phi}=0
\end{equation*}
It follows that $QP=0$ so $QP=(QP)^*=PQ$. If $P=P_1$, then multiplying by $P$ on the left gives $PQ=PQ_1P$. Hence, $PQ=(PQ)^*=QP$.
\newline (b)\enspace If $\ascript =\bscript$, the result is trivial. Suppose $\ascript\bscript =\bscript\ascript$ and $P_\phi\in\ascript$. Then there exists $P_\psi\in\bscript$ with $\elbows{\phi ,\psi}\ne 0$. By (a), $P_\phi P_\psi =P_\psi P_\phi$ so that
$\elbows{\phi ,\psi}\ket{\phi}\bra{\psi}=\elbows{\psi ,\phi}\ket{\psi}\bra{\phi}$. Operating on $\psi$ gives
\begin{equation*}
\elbows{\phi ,\psi}\phi =\ab{\elbows{\psi ,\phi}}^2\psi
\end{equation*}
Since $\elbows{\phi ,\psi}\ne 0$, we have that $\phi =c\psi$ with $\ab{c}=1$. Hence, $P_\phi =P_\psi$ so $P_\phi\in\bscript$. Similarly, $P_\psi\in\ascript$ for all $P_\psi\in\bscript$. Hence, $\ascript =\bscript$.
\end{proof}

\begin{cor}    
\label{cor25}
If $\ascript ,\bscript\in\rmsh (H)$, then $\ascript\bscript\in\rmsh (H)$ if and only if $\ascript\bscript =\bscript\ascript$.
\end{cor}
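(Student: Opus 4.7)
The plan is to reduce the corollary to Theorem~\ref{thm24}(a) by exploiting the classical fact that a product of two projections is itself a projection if and only if the two projections commute. Write $\ascript = \brac{P_i}$ and $\bscript = \brac{Q_j}$ with $P_i, Q_j\in\pscript (H)$, so that $\ascript\bscript = \brac{P_iQ_j}$ (after discarding zero branches) is the channel I need to analyze.

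For the direction $(\Leftarrow)$, I would assume $\ascript\bscript = \bscript\ascript$, apply Theorem~\ref{thm24}(a) to conclude $P_iQ_j = Q_jP_i$ for all $i,j$, and then verify directly that each commuting product of projections is a projection: $(P_iQ_j)^* = Q_j^*P_i^* = Q_jP_i = P_iQ_j$ and $(P_iQ_j)^2 = P_iQ_jP_iQ_j = P_i^2Q_j^2 = P_iQ_j$. Hence every branch of $\ascript\bscript$ lies in $\pscript (H)$, so $\ascript\bscript\in\rmsh (H)$.

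For the direction $(\Rightarrow)$, I would assume $\ascript\bscript\in\rmsh (H)$, so each nonzero $P_iQ_j$ is a projection and in particular self-adjoint. Then $P_iQ_j = (P_iQ_j)^* = Q_jP_i$, which gives commutativity of every pair $P_i,Q_j$. Invoking Theorem~\ref{thm24}(a) in the opposite direction then yields $\ascript\bscript = \bscript\ascript$.

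There is no real obstacle here; the only thing to watch is the bookkeeping of zero branches in $\ascript\bscript$, which is harmless since the zero operator commutes with everything and the commutativity conclusion $P_iQ_j = Q_jP_i$ holds trivially whenever $P_iQ_j = 0$ as well (multiplying $Q_jP_i$ by $P_i$ on the right and comparing with the fact that $P_i$ and $Q_j$ are self-adjoint projections). Thus the whole corollary is essentially a one-line consequence of Theorem~\ref{thm24}(a) together with the standard projection-product lemma.
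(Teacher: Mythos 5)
Your proposal is correct and follows essentially the same route as the paper: both reduce the statement to Theorem~\ref{thm24}(a) together with the standard fact that a product of two projections is a projection if and only if they commute. Your version merely spells out the self-adjointness computation and the handling of zero branches that the paper leaves implicit.
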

\begin{proof}
By Theorem~\ref{thm24}(a), $\ascript\bscript =\bscript\ascript$ if and only if $PQ=QP$ for all $P\in\ascript$, $Q\in\bscript$. But
$PQ\in\pscript (H)$ if $PQ=(PQ)^*=QP$.
\end{proof}

\begin{cor}    
\label{cor26}
If $\ascript$ and $\bscript$ are contexts and $L_\ascript (P)=L_\bscript (P)$ for all $P\in\ascript$, then $\ascript =\bscript$.
\end{cor}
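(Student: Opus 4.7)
Write $\ascript = \brac{P_{\phi_i}}$ and $\bscript = \brac{P_{\psi_k}}$, so the hypothesis is that $L_\ascript(P_{\phi_i}) = L_\bscript(P_{\phi_i})$ for every $i$. My plan is to apply the explicit formula from Theorem~\ref{thm11}(b)(i) to both sides and then use the extremality of pure states in $\dscript(H)$ to force each $P_{\phi_i}$ into $\bscript$.

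First I would compute $L_\ascript(P_{\phi_i})$ via Theorem~\ref{thm11}(b)(i): since $\elbows{\phi_j, P_{\phi_i}\phi_j} = \ab{\elbows{\phi_j,\phi_i}}^2 = \delta_{ij}$ by orthonormality of $\brac{\phi_j}$, this simply collapses to $P_{\phi_i}$. Similarly $L_\bscript(P_{\phi_i}) = \sum_k \ab{\elbows{\psi_k,\phi_i}}^2 P_{\psi_k}$. The hypothesis therefore becomes the operator identity
\begin{equation*}
P_{\phi_i} = \sum_k \ab{\elbows{\psi_k,\phi_i}}^2 P_{\psi_k},
\end{equation*}
whose coefficients are nonnegative and sum to $\doubleab{\phi_i}^2 = 1$, expressing the pure state $P_{\phi_i}$ as a convex combination of the pure states $P_{\psi_k}$.

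Next I would invoke extremality: a pure state in $\dscript(H)$ cannot be written as a nontrivial convex combination of distinct pure states, and since the $P_{\psi_k}$ are pairwise orthogonal (hence distinct) rank-1 projections, exactly one coefficient $\ab{\elbows{\psi_{k(i)},\phi_i}}^2$ can be nonzero, and it must equal $1$. Cauchy--Schwarz then forces $\psi_{k(i)} = c\phi_i$ with $\ab{c}=1$, so $P_{\psi_{k(i)}} = P_{\phi_i} \in \bscript$. Hence $\ascript \subseteq \bscript$, and because $\brac{\phi_i}$ is already an orthonormal basis of $H$, no additional $\psi_k$ can exist, giving $\ascript = \bscript$.

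The only step that needs any care is the extremality argument; the rest is bookkeeping with the Theorem~\ref{thm11}(b)(i) formula. An alternative route would be to apply Lemma~\ref{lem23} with $A = P_{\phi_i}$ directly: taking $j=i$ yields $\ab{\elbows{\psi_k,\phi_i}}^2 = 1$ whenever $\elbows{\psi_k,\phi_i}\ne 0$, which by Cauchy--Schwarz again identifies each $\phi_i$ (up to a phase) with some $\psi_k$. Either approach reaches the conclusion in essentially the same way.
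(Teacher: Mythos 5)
Your proposal is correct, but it follows a genuinely different route from the paper's. The paper argues via measurability and commutativity: since each $P\in\ascript$ is measurable with respect to $\ascript$, Theorem~\ref{thm11}(a) gives $L_\ascript (P)=P$, so the hypothesis becomes $L_\bscript (P)=P$; the converse half of Theorem~\ref{thm11}(a) (applicable because $\bscript$ is sharp) then says $P$ commutes with every $Q\in\bscript$, whence $\ascript\bscript =\bscript\ascript$ by Theorem~\ref{thm24}(a) and $\ascript =\bscript$ by Theorem~\ref{thm24}(b). You instead expand both sides with Theorem~\ref{thm11}(b)(i) to get the operator identity $P_{\phi _i}=\sum _k\ab{\elbows{\psi _k,\phi _i}}^2P_{\psi _k}$ and finish by extremality of pure states plus Cauchy--Schwarz. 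That step is sound, and one can even avoid invoking extremality: the right-hand side is diagonal in the basis $\brac{\psi _k}$ with eigenvalues $\ab{\elbows{\psi _k,\phi _i}}^2$, and a rank-one projection has exactly one nonzero eigenvalue, equal to $1$. Your closing observation that $\ascript\subseteq\bscript$ forces $\ascript =\bscript$ (any leftover $P_{\psi _k}$ would be orthogonal to $\sum P_{\phi _i}=I$, hence zero) is also right, as is the alternative via Lemma~\ref{lem23} with $A=P_{\phi _i}$ and $j=i$. What the paper's route buys is brevity by reusing the commutativity machinery of Theorem~\ref{thm24}; what yours buys is a self-contained, constructive identification of each $\phi _i$ with some $\psi _k$ up to a phase, independent of that theorem.
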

\begin{proof}
If $P\in\ascript$, we have that $L_\bscript (P)=L_\ascript (P)=P$. We conclude that $PQ=QP$ for all $Q\in\bscript$ so
$\ascript\bscript =\bscript\ascript$. By Theorem~\ref{thm24}(b), $\ascript =\bscript$.
\end{proof}

If $\ascript ,\bscript\in\rmsh (H)$, we write $\ascript\le\bscript$ if for every $P\in\ascript$ there exists a $Q\in\bscript$ with $ P\le Q$. It is clear that $\ascript\le\bscript$, $\bscript\le\cscript$ implies that $\ascript\le\cscript$, $\ascript\le\ascript$ and $\ascript\le\iscript$. Moreover, if
$\ascript\le\bscript$ and $\bscript\le\ascript$ then for all $P\in\ascript$, there exists $Q\in\bscript$ and $P_1\in\ascript$ such that
$P\le Q\le P_1$. But then $P=P_1=Q$ and we conclude that $\ascript =\bscript$. It follows that $\rmsh (H)$ is a partially ordered set with largest element $\iscript$.

\begin{thm}    
\label{thm27}
For $\ascript ,\bscript\in\rmsh (H)$ we have that $\ascript\le\bscript$ if and only if every $Q\in\bscript$ has the form $Q=\sum P_i$,
$P_i\in\ascript$.
\end{thm}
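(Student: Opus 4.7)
The plan is to prove both directions by exploiting the orthogonal partition-of-unity structure that sharpness imposes: both $\ascript=\{P_i\}$ and $\bscript=\{Q_k\}$ consist of mutually orthogonal projections summing to $I$, so $P_iP_j=\delta_{ij}P_i$ and $Q_kQ_\ell=\delta_{k\ell}Q_k$.

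For the forward direction, fix an arbitrary $Q\in\bscript$ and expand $Q$ through the resolution $\sum_i P_i=I$ of $\ascript$, obtaining $Q=\sum_i QP_i$. The goal is to show that each term $QP_i$ is either $P_i$ or $0$. By hypothesis each $P_i\in\ascript$ admits some $Q_i\in\bscript$ with $P_i\le Q_i$; this $Q_i$ is forced to be unique, because if $P_i\le Q$ and $P_i\le Q'$ with $Q\ne Q'$ in $\bscript$, orthogonality would give $QQ'=0$, yet both projections dominate the nonzero projection $P_i$, a contradiction. Now split on cases: if $Q_i=Q$, then $P_i\le Q$ implies $QP_i=P_i$; if $Q_i\ne Q$, then $QQ_i=0$ and $P_i=Q_iP_i$, so $QP_i=QQ_iP_i=0$. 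Summing, $Q=\sum_{\{i\,:\,P_i\le Q\}}P_i$, which is the desired representation.

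For the converse, assume each $Q\in\bscript$ has the form $Q=\sum_{i\in S_Q}P_i$ with $S_Q\subseteq\{1,2,\ldots\}$ indexing elements of $\ascript$. Then for every $i\in S_Q$, orthogonality of the $P_j$'s gives $P_i\le\sum_{j\in S_Q}P_j=Q$, so $P_i\le Q$. It remains to check that every $P_i\in\ascript$ lies in some $S_Q$, which follows from $\sum_{Q\in\bscript}Q=I=\sum_jP_j$: substituting the assumed representations shows the collection $\{S_Q\}$ covers the index set of $\ascript$ (indeed, it partitions it). Hence every $P\in\ascript$ is dominated by some $Q\in\bscript$, i.e.\ $\ascript\le\bscript$.

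I do not expect any real obstacle here; the single delicate point is the uniqueness of the dominating $Q_i$ for each $P_i$, which is what allows the ``other'' terms $QP_i$ to collapse to zero. Once that is established, both directions are short calculations with the orthogonality relations, and no appeal beyond Theorem~\ref{thm11} and the sharpness condition is needed.
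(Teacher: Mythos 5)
Your proof is correct and follows essentially the same route as the paper: both directions rest on the mutual orthogonality of the projections within each sharp channel, the uniqueness of the element of $\bscript$ dominating a given $P\in\ascript$, and the resolution of the identity. The only cosmetic difference is in the direction $\ascript\le\bscript\Rightarrow Q=\sum P_i$, where you expand $Q=\sum_i QP_i$ and evaluate each term directly, while the paper reaches the same conclusion by contradiction from the assumption $\sum\brac{P_i\colon P_i\le Q}<Q$.
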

\begin{proof}
Assume that every $Q\in\bscript$ has the form $Q=\sum P_i$, $P_i\in\ascript$. Suppose $P\in\ascript$ and there does not exist a
$Q\in\bscript$ with $P\le Q$. Then for every $Q\in\bscript$ we have $Q=\sum P_i$, $P_i\in\ascript$ where $P_i\ne P$. But then $Q\perp P$ and we conclude that $I=\sum\brac{Q\colon Q\in\bscript}\perp P$. Hence, $P=0$ which is a contradiction so $\ascript\le\bscript$. Conversely, assume that $\ascript\le\bscript$. Let $Q\in\bscript$ and let $\brac{P_i}$ be the set of elements of $\ascript$ such that $P_i\le Q$. Then
$\sum P_i\le Q$ and Suppose that $\sum P_i<Q$. If $\brac{R_i}$ is the set of other projections in $\ascript$, we have $R_j\not\le Q$ so
$R_j\le Q_1\in\bscript$ where $Q_1\ne Q$. Since $Q_1\perp Q$, we have that $R_j\perp Q$ for every $j$. Since $R_j\le I-Q$ we have that
\begin{equation*}
I=\sum P_i+\sum P_j<Q+I-Q=I
\end{equation*}
which is a contradiction. We conclude that $\sum P_i=Q$.
\end{proof}

\begin{thm}    
\label{thm28}
Let $\ascript ,\bscript\in\rmsh (H)$.
{\rm (a)}\enspace $\ascript\le\bscript$ if and only if $\ascript\bscript =\ascript$.
{\rm (b)}\enspace If $\ascript\le\bscript$, then $L_\ascript L_\bscript =L_\ascript$.
\end{thm}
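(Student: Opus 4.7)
My plan is to establish (a) first and then derive (b) as a quick consequence. For the forward direction of (a), assuming $\ascript\le\bscript$, I would invoke Theorem~\ref{thm27} to write each $Q\in\bscript$ as a sum of those elements of $\ascript$ it dominates. Since $\sum_{Q\in\bscript}Q=I=\sum_{P\in\ascript}P$ and $\ascript$ consists of mutually orthogonal projections, these decompositions must partition $\ascript$: every $P\in\ascript$ sits below a unique $Q(P)\in\bscript$, giving $PQ(P)=P$ and $PQ=0$ whenever $Q\ne Q(P)$. The nonzero members of the family $\ascript\bscript=\brac{PQ:P\in\ascript,\,Q\in\bscript}$ are therefore exactly the $P\in\ascript$, and hence $\ascript\bscript=\ascript$.

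The converse direction of (a) is the step I expect to require the most care. From $\ascript\bscript=\ascript$ I would first note that each nonzero product $PQ$ lies in $\ascript$, hence is a (self-adjoint) projection, so $PQ=QP$ and $PQ$ is a projection dominated by $P$. Fixing $P\in\ascript$ and expanding $P=P\cdot I=\sum_{Q\in\bscript}PQ$ then yields a sum of pairwise orthogonal projections, each belonging to $\ascript\cup\brac{0}$. The key observation I would exploit is that any $P'\in\ascript$ with $P'\le P$ must equal $P$: the inequality gives $P'=P'P$, but mutual orthogonality of distinct elements of $\ascript$ forces $P'P=0$ unless $P'=P$, so otherwise $P'=0$, contradicting $P'\in\ascript$. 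Applying this to each nonzero summand of $P=\sum_QPQ$, all but one term in the expansion must vanish and the surviving term equals $P$; that surviving term has the form $PQ=P$ for a unique $Q\in\bscript$, yielding $P\le Q$. Since this holds for every $P\in\ascript$, we conclude $\ascript\le\bscript$.

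For part (b), I would combine (a) with the multiplicative identity $L_{\ascript\bscript}=L_\ascript L_\bscript$ recorded in the text just before Theorem~\ref{thm24}: since $\ascript\bscript=\ascript$, it follows immediately that $L_\ascript L_\bscript=L_{\ascript\bscript}=L_\ascript$. Alternatively, this is easy to verify by a direct calculation, since for $\ascript=\brac{P_i}\le\bscript=\brac{Q_j}$ the forward direction of (a) already supplies the identities $P_iQ_j=\delta_{j,j(i)}P_i=Q_jP_i$, which collapse $L_\ascript\paren{L_\bscript (A)}=\sum_{i,j}P_iQ_jAQ_jP_i$ to $\sum_iP_iAP_i=L_\ascript(A)$. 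Either route makes (b) a one-line consequence of the structural description in (a), so the real content of the theorem is the characterization in (a) and specifically its converse direction.
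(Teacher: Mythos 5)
Your proof is correct and follows essentially the same route as the paper: the heart in both cases is that a nonzero product $PQ$ with $P\in\ascript$, $Q\in\bscript$ must, under $\ascript\bscript=\ascript$, be an element of $\ascript$ dominated by $P$ and hence equal to $P$, while part (b) falls out of $L_{\ascript\bscript}=L_\ascript L_\bscript$. Your detour through Theorem~\ref{thm27} in the forward direction of (a) is harmless but unnecessary, since the paper obtains $PQ=P$ and $PQ_1=0$ directly from the definition of $\ascript\le\bscript$ and the orthogonality of distinct elements of $\bscript$.
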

\begin{proof}
(a)\enspace Suppose $\ascript\le\bscript$. Now
\begin{equation*}
\ascript\bscript=\brac{PQ\colon P\in\ascript ,Q\in\bscript ,PQ\ne 0}
\end{equation*}
Since $\ascript\le\bscript$, for every $P\in\ascript$ there exists a $Q\in\bscript$ with $P\le Q$ so $PQ=P$. Then for every $Q_1\in\bscript$ with $Q_1\ne Q$ we have that $Q_1\perp Q$ so $PQ_1=0$. Hence, $\ascript\bscript =\brac{P\colon P\in\ascript}=\ascript$. Conversely, suppose $\ascript\bscript =\ascript$. If $P\in\ascript$ then there exists $Q\in\bscript$ such that $PQ\ne 0$ because otherwise
$P=\sum\brac{PQ\colon Q\in\bscript}=0$ which is a contradiction. Since $\ascript\bscript =\ascript$, there exists a $P_1\in\ascript$ with $PQ=QP=P_1$. Now $P=P_1$ or $PP_1=0$. If $PP_1=0$, then $P_1=P_1PQ=0$ which is a contradiction. Therefore, $P=P_1$ so $PQ=P$ and $P\le Q$. We conclude that $\ascript\le\bscript$.
(b)\enspace If $\ascript\le\bscript$, then by (a) we have that $\ascript\bscript =\ascript$. Therefore,
\begin{equation*}
L_\ascript L_\bscript =L_\bscript L_\ascript =L_\ascript\qedhere
\end{equation*}
\end{proof}

The next result shows that contexts are the smallest sharp channels.

\begin{thm}    
\label{thm29}
{\rm (a)}\enspace If $\ascript$ is a context and $\bscript\in\rmsh (H)$ with $\bscript\le\ascript$, then $\bscript =\ascript$.
{\rm (b)}\enspace Under the order $\le$, $\rmsh (H)$ is not a lattice.
\end{thm}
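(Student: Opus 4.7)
For part (a) the key observation is that a context has rank-$1$ branches, and the only nonzero sub-projection of a rank-$1$ projection is itself. I would translate $\bscript\le\ascript$ directly from the definition: for every $Q\in\bscript$ there exists $P\in\ascript$ with $Q\le P$. Since $\ascript$ is a context, $P\in\pscript_1(H)$, and since branches of a channel are nonzero, $Q\ne 0$; the order interval $[0,P]\cap\pscript(H)$ consists only of $0$ and $P$, forcing $Q=P$. Hence every branch of $\bscript$ is already a branch of $\ascript$, so $\bscript\subseteq\ascript$. Using that both $\brac{Q\colon Q\in\bscript}$ and $\brac{P\colon P\in\ascript}$ are families of mutually orthogonal projections summing to $I$, any branch of $\ascript$ not in $\bscript$ would be orthogonal to $I=\sum_{Q\in\bscript}Q$, hence $0$, which is impossible. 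Thus $\ascript=\bscript$. (Theorem~\ref{thm27} gives an equivalent and equally short route via $P=\sum Q_j$, but the direct argument is cleaner.)

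For part (b), the strategy is to show that in the nontrivial case $\dim H\ge 2$, two distinct contexts need not possess any common lower bound in $\rmsh(H)$, so the meet is not merely unspecified but fails to exist. I would pick two distinct contexts $\ascript_1=\brac{P_{\phi_i}}$ and $\ascript_2=\brac{P_{\psi_i}}$ (for instance, a standard basis and a rotated basis in a two-dimensional subspace, extended arbitrarily). Suppose, toward a contradiction, that some $\cscript\in\rmsh(H)$ satisfies $\cscript\le\ascript_1$ and $\cscript\le\ascript_2$. By part~(a), $\cscript\le\ascript_j$ forces $\cscript=\ascript_j$ for $j=1,2$, contradicting $\ascript_1\ne\ascript_2$. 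So $\brac{\ascript_1,\ascript_2}$ has no lower bound whatsoever, hence no greatest lower bound, and $\rmsh(H)$ is not a lattice.

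\textbf{Main obstacle.} There really isn't a hard step: part~(a) is essentially a one-line consequence of the definition of $\le$ once one notices that rank-$1$ projections have no nontrivial sub-projections, and part~(b) is an immediate formal consequence of part~(a) after producing any two distinct contexts. The only subtle point is making sure the conclusion in (b) is not conflated with "some pair has a lower bound that isn't a greatest lower bound"; what actually happens, and what needs to be emphasized, is that distinct contexts share no lower bound at all, which is a strictly stronger obstruction to being a lattice.
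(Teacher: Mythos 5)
Your proof is correct and follows essentially the same route as the paper: in (a) you use that a rank-$1$ projection has no nonzero proper sub-projections to get $\bscript\subseteq\ascript$ and then the completeness relation $\sum\brac{Q\colon Q\in\bscript}=I$ to force equality, and in (b) you apply (a) to show two distinct contexts admit no common lower bound. Your only additions are minor but welcome clarifications the paper leaves implicit, namely the orthogonality argument closing part (a) and the remark that (b) requires $\dim H\ge 2$ to produce two distinct contexts.
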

\begin{proof}
(a)\enspace Since $\bscript\le\ascript$ if $Q\in\bscript$ then there exists a $P\in\ascript$ such that $Q\le P$. Since $P\in\pscript _1(H)$, $Q=P$. Hence, $\bscript\subseteq\ascript$. Since $\sum\brac{Q\colon Q\in\bscript}=I$, we have that $\bscript =\ascript$.
(b)\enspace Let $\ascript$ and $\bscript$ be distinct contexts. If $\cscript\in\rmsh (H)$ satisfies $\cscript\le\ascript ,\bscript$ then by (a)
$\ascript =\cscript =\bscript$ which is a contradiction. Hence, the greatest lower bound $\ascript\wedge\bscript$ does not exist.
\end{proof}

\section{Mutually Unbiased Contexts} 
We have seen in Theorem~\ref{thm24}(b) that for contexts $\ascript$, $\bscript$ we have that $\ascript\bscript =\bscript\ascript$ if and only if
$\ascript =\bscript$. We shall later give an example in which $\ascript\ne\bscript$ but $L_\ascript L_\bscript =L_\bscript L_\ascript$. We first characterize the equality.

\begin{thm}    
\label{thm31}
Let $\ascript =\brac{P_i}$, $\bscript =\brac{Q_i}$ be contexts with $P_i=P_{\phi _i}$, $Q_i=P_{\psi _i}$. Then
$L_\ascript L_\bscript =L_\bscript L_\ascript$ if and only if
\begin{equation}                
\label{eq31}
\sum _j\ab{\elbows{\phi _i,\psi _j}}^2\elbows{\psi _j,\phi _r}\elbows{\phi _s,\psi _j}=\delta _{rs}\ab{\elbows{\psi _k,\psi _r}}^2
\end{equation}
whenever $\elbows{\phi _i,\psi _k}\ne 0$
\end{thm}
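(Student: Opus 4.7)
The plan is to test the operator identity on the rank-one units $A_{rs} = \ket{\phi_r}\bra{\phi_s}$ and then extract matrix elements of the form $\elbows{\psi_k, T\phi_i}$. Since $\brac{\phi_i}$ and $\brac{\psi_k}$ are orthonormal bases, such mixed matrix elements determine $T$ uniquely; and since both channel maps are linear and weakly continuous, agreement on every $A_{rs}$ extends by standard density arguments to all of $\bscript(H)$. This reduces the operator identity $L_\ascript L_\bscript = L_\bscript L_\ascript$ to a scalar identity in $i,k,r,s$, which will coincide with (3.1) after cancelling a common factor $\elbows{\psi_k,\phi_i}$ — which is precisely why the hypothesis $\elbows{\phi_i,\psi_k}\ne 0$ appears.

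For the explicit computation, Theorem~\ref{thm11}(b)(i) gives $L_\bscript(A_{rs})=\sum_j\elbows{\psi_j,\phi_r}\elbows{\phi_s,\psi_j}P_{\psi_j}$, and applying $L_\ascript$ again yields
\begin{equation*}
L_\ascript L_\bscript(A_{rs})=\sum_{i',j}\ab{\elbows{\phi_{i'},\psi_j}}^2\elbows{\psi_j,\phi_r}\elbows{\phi_s,\psi_j}P_{\phi_{i'}}.
\end{equation*}
In the reverse order, $\elbows{\phi_i,A_{rs}\phi_i}=\delta_{ir}\delta_{si}$ collapses things: $L_\ascript(A_{rs})=\delta_{rs}P_{\phi_r}$, so
\begin{equation*}
L_\bscript L_\ascript(A_{rs})=\delta_{rs}\sum_j\ab{\elbows{\psi_j,\phi_r}}^2P_{\psi_j}.
\end{equation*}
Now take the matrix element $\elbows{\psi_k,\cdot\,\phi_i}$ on each side: on the left, $\elbows{\psi_k,P_{\phi_{i'}}\phi_i}=\delta_{i'i}\elbows{\psi_k,\phi_i}$ produces a factor $\elbows{\psi_k,\phi_i}$; on the right, $\elbows{\psi_k,P_{\psi_j}\phi_i}=\delta_{kj}\elbows{\psi_k,\phi_i}$ produces the same factor. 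Under the hypothesis $\elbows{\phi_i,\psi_k}\ne 0$ we may cancel this common factor, obtaining precisely~(3.1).

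The converse simply reverses these steps: when (3.1) holds, the displayed scalar identity recovers equality of the mixed matrix elements $\elbows{\psi_k,L_\ascript L_\bscript(A_{rs})\phi_i}=\elbows{\psi_k,L_\bscript L_\ascript(A_{rs})\phi_i}$ for every $i,k,r,s$ with $\elbows{\phi_i,\psi_k}\ne 0$, while for $\elbows{\phi_i,\psi_k}=0$ both matrix elements already vanish because of the factor $\elbows{\psi_k,\phi_i}$ that I extracted above. Hence $L_\ascript L_\bscript(A_{rs})=L_\bscript L_\ascript(A_{rs})$ for all $r,s$, and linearity plus weak continuity extend the identity to all of $\bscript(H)$.

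The main obstacle I expect is the index bookkeeping in the double-sum expansion, together with carefully justifying that the quantifier ``whenever $\elbows{\phi_i,\psi_k}\ne 0$'' captures exactly the nontrivial content — i.e.\ that the vanishing case carries no constraint because the same scalar factor appears on both sides. A secondary technical point is the passage from the rank-one units $\ket{\phi_r}\bra{\phi_s}$ to an arbitrary $A\in\bscript(H)$, which one handles by noting that both $L_\ascript L_\bscript$ and $L_\bscript L_\ascript$ are bounded linear maps agreeing on the linear span of these units, whose image is weakly dense.
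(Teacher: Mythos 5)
Your proof is correct and follows essentially the same route as the paper: both evaluate the two compositions on the rank-one operators $\ket{\phi _r}\bra{\phi _s}$, reduce the resulting operator identity to scalars (you via the mixed matrix element $\elbows{\psi _k,\,\cdot\ \phi _i}$, the paper by splitting into the cases $r\ne s$ and $r=s$ and, in the latter, applying the operator identity to $\psi _k$ and pairing with $\phi _i$), and then cancel the common nonzero factor $\elbows{\psi _k,\phi _i}$, finishing by linearity. One remark: the right-hand side you (correctly) derive is $\delta _{rs}\ab{\elbows{\psi _k,\phi _r}}^2$; the $\ab{\elbows{\psi _k,\psi _r}}^2$ appearing in the printed statement of \eqref{eq31} is a typo, as the paper's own proof and its later application to mutually unbiased bases confirm.
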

\begin{proof}
Suppose $L_\ascript L_\bscript =L_\bscript L_\ascript$. Operating on $A=\ket{\phi _r}\bra{\phi _s}$ gives
\begin{align*}
L_\ascript\sqbrac{L_\bscript (A)}&=\sum _{i,j}P_iQ_j\ket{\phi _r}\bra{\phi _s}Q_jP_i\\
  &=\sum _{i,j}\ab{\elbows{\phi _i,\psi _j}}^2\elbows{\psi _j,\phi _r}\elbows{\phi _s,\psi _j}P_i
\end{align*}
and
\begin{align*}
L_\bscript\sqbrac{L_\ascript (A)}&=\sum _{i,j}Q_iP_j\ket{\phi _r}\bra{\phi _s}P_jQ_i
    =\delta _{rs}\sum _iQ_i\ket{\phi _r}\bra{\phi _r}Q_i\\
    &=\delta _{rs}\sum _i\ab{\elbows{\psi _i,\phi _r}}^2Q_i
\end{align*}
If $r\ne s$ we conclude that
\begin{equation*}
\sum _j\ab{\elbows{\phi _i,\psi _j}}^2\elbows{\psi _j,\phi _r}\elbows{\phi _s,\psi _j}=0
\end{equation*}
If $r=s$, we have that
\begin{equation*}
\sum _{i,j}\ab{\elbows{\phi_i,\psi _j}}^2\ab{\elbows{\psi _j,\phi _r}}^2P_i=\sum _i\ab{\elbows{\psi _i,\phi _r}}^2Q_i
\end{equation*}
Operating on $\psi _k$ gives
\begin{equation*}
\sum _{i,j}\ab{\elbows{\phi _i,\psi _j}}^2\ab{\elbows{\psi _j,\phi _r}}^2\elbows{\phi _i,\psi _k}\phi _i=\ab{\elbows{\psi _k,\phi _r}}^2\psi _k
\end{equation*}
Taking the inner product with $\phi _i$ we have that
\begin{equation*}
\sum _j\ab{\elbows{\phi _i,\psi _j}}^2\ab{\elbows{\psi _j\phi _r}}^2\elbows{\phi _i,\psi _k}=\ab{\elbows{\psi _k,\phi _r}}^2\elbows{\phi _i,\psi _k}
\end{equation*}
If $\elbows{\phi _i,\psi _k}\ne 0$, we obtain
\begin{equation*}
\sum _j\ab{\elbows{\phi _i,\psi _j}}^2\ab{\elbows{\psi _j,\phi _r}}^2=\ab{\elbows{\psi _k,\phi _r}}^2
\end{equation*}
This gives \eqref{eq31}. Conversely, if \eqref{eq31} holds we can work backwards with our previous equations to show that
\begin{equation*}
L_\ascript\sqbrac{L_\bscript (A)}=L_\bscript\sqbrac{L_\ascript (A)}
\end{equation*}
The result then follows by linearity.
\end{proof}

\begin{exam}{1}  
In $H=\complex ^2$, let $\phi _1=(1,0)$, $\phi _2=(0,1)$, $\psi _1=\tfrac{1}{\sqrt{2}}(1,1)$, $\psi _2=\frac{1}{\sqrt{2}}(-1,1)$. It is easy to check that \eqref{eq31} holds. For instance, if $r=1$, $s=2$, $i=1$ we have
\begin{equation*}
\sum _{j=1}^2\ab{\elbows{\phi _1,\psi _j}}^2\elbows{\phi _j,\phi _1}\elbows{\phi _2,\psi _j}
   =\tfrac{1}{2}\ctimes\tfrac{1}{\sqrt{2}}\ctimes\tfrac{1}{2}+\tfrac{1}{2}\ctimes\paren{-\tfrac{1}{\sqrt{2}}}\ctimes\tfrac{1}{\sqrt{2}}=0
\end{equation*}
Moreover, since $\ab{\elbows{\phi _i,\psi _j}}^2=1/2$ for all $i,j$, we have that
\begin{equation*}
\sum _{j=1}^2\ab{\elbows{\phi _1,\psi _j}}^2\ab{\elbows{\psi _j,\phi _r}}^2=\tfrac{1}{2}\ctimes\tfrac{1}{2}+\tfrac{1}{2}\ctimes\tfrac{1}{2}
=\tfrac{1}{2}=\ab{\elbows{\psi _k,\phi _r}}^2
\end{equation*}
Letting $\ascript =\brac{P_{\phi _1},P_{\phi _2}}$, $\bscript =\brac{P_{\psi _1},P_{\psi _2}}$ we conclude from Theorem~\ref{thm31} that
$L_\ascript L_\bscript =L_\bscript L_\ascript$.\quad\qedsymbol
\end{exam}

\begin{exam}{2}  
In $H=\complex ^2$, let $\phi _1=(1,0)$, $\phi _2=(0,1)$, $\psi _1=\tfrac{1}{2}(\sqrt{3},1)$, $\psi _2=\frac{1}{2}(-1,\sqrt{3})$. 
We now show that \eqref{eq31} does not hold. If $r=1$, $s=2$, $i=1$ we have
\begin{equation*}
\sum _{j=1}^2\ab{\elbows{\phi _1,\psi _j}}^2\elbows{\psi _j,\phi _1}\elbows{\phi _2\psi _j}
=\tfrac{3}{4}\ctimes\tfrac{\sqrt{3}}{2}\ctimes\tfrac{1}{2}+\tfrac{1}{4}\ctimes\paren{-\tfrac{1}{2}}\ctimes\tfrac{\sqrt{3}}{2}=\tfrac{\sqrt{3}}{8}\ne 0
\end{equation*}
Moreover, if $r=s=i=k=1$, we have
\begin{equation*}
\sum _{j=1}^2\ab{\elbows{\phi _1,\psi _j}}^2\ab{\elbows{\psi _j,\phi _1}}^2=\tfrac{9}{16}+\tfrac{1}{16}=\tfrac{5}{8}\ne\tfrac{3}{4}
  =\ab{\elbows{\psi _1,\phi _1}}^2
\end{equation*}
We conclude that $L_\ascript L_\bscript\ne L_\bscript L_\ascript$.\quad\qedsymbol
\end{exam}

For the rest of this section, we assume that $H$ is finite-dimensional with $\dim H=n$. We say that $a$ unit vector $\psi\in H$ is
\textit{unbiased} in an orthonormal basis $\brac{\phi _i}$ for $H$ if $\ab{\elbows{\psi ,\phi _i}}^2=\ab{\elbows{\psi ,\phi _j}}^2$ for all $i,j$. This terminology stems from the fact that the transition probabilities $\ab{\elbows{\psi ,\phi _i}}^2$ are the same for all $i=1,\ldots ,n$. Since
\begin{equation*}
n\ab{\elbows{\psi ,\phi _i}}^2=\sum _{i=1}^n\ab{\elbows{\psi ,\phi _i}}^2=\doubleab{\psi}^2=1
\end{equation*}
we conclude that $\ab{\elbows{\psi ,\phi _i}}^2=1/n$, $I=1,\ldots ,n$. We say that two orthonormal bases $\brac{\psi _i}$, $\brac{\phi _i}$ are
\textit{mutually unbiased} if each $\psi _i$ is unbiased in $\brac{\phi _i}$ so that $\ab{\elbows{\psi _i,\phi _i}}^2=1/n$ for all $i,j=1,\ldots ,n$
\cite{bbrv01,debz10,wf89}. The two bases in Example~1 are mutually unbiased. Defining the basis $\eta _1=\tfrac{1}{\sqrt{2}}(1,i)$,
$\eta _2=\tfrac{1}{\sqrt{2}}(1,-i)$, all three of these bases are mutually unbiased. These are called a \textit{maximal set} of mutually unbiased bases because the largest number of such bases in $\complex ^2$ is three \cite{bbrv01,debz10,wf89}.  Two contexts are \textit{mutually unbiased} if their corresponding bases are mutually unbiased. Also, an operator $A\in\bscript (H)$ is \textit{unbiased} in a context
$\ascript =\brac{P_{\phi _i}}$ if $\elbows{\phi _i,A\phi _i}=\elbows{\phi _j,A\phi _j}$ for all $i,j=1,2,\ldots ,n$. It follows that 
\begin{equation*}
\rmtr (A)=\sum _{i=1}^n\elbows{\phi _i,A\phi _i}=n\elbows{\phi _j,A\phi _j}
\end{equation*}
so that $\elbows{\phi _j,A\phi _j}=\rmtr (A)/n$ for $j=1,2,\ldots ,n$.

If $\brac{\phi _i}$, $\brac{\psi _i}$ are mutually unbiased bases, we have that
\begin{align*}
\sum _j\ab{\elbows{\phi _i,\psi _j}}^2&\elbows{\psi _j,\phi _r}\elbows{\phi _s,\psi _j}
  =\tfrac{1}{n}\sum _j\elbows{\psi _j,\phi _r}\elbows{\phi _s,\psi _j}\\
  &=\tfrac{1}{n}\elbows{\phi _s,\phi _r}=\delta _{rs}\tfrac{1}{n}=\delta _{rs}\ab{\elbows{\psi _k,\phi _r}}^2
\end{align*}
Hence, \eqref{eq31} holds and it follows from Theorem~\ref{thm31} that if $\ascript =\brac{P_{\phi _i}}$, $\bscript =\brac{P_{\psi _i}}$ are the corresponding contexts, then $L_\ascript L_\bscript =L_\bscript L_\ascript$

\begin{thm}    
\label{thm32}
{\rm (a)}\enspace If $\ascript =\brac{P_{\phi _i}}$, $\bscript =\brac{P_{\psi _i}}$ are mutually unbiased contexts, then 
\begin{equation*}
L_\ascript\sqbrac{L_\bscript (A)}=L_\bscript\sqbrac{L_\ascript (A)}=\rmtr (A)I/n
\end{equation*}
for all $A\in\bscript (H)$.
{\rm (b)}\enspace If $L_\ascript\sqbrac{L_\bscript (P)}=I/n$ for all $P\in\pscript _1(H)$ then $\ascript$ and $\bscript$ are mutually unbiased.
\end{thm}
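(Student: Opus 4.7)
The plan is to prove both parts by direct computation using the explicit formula $L_\cscript (A) = \sum _k \elbows{\chi _k, A\chi _k} P_{\chi _k}$ from Theorem~\ref{thm11}(b)(i), which holds for any context $\cscript = \brac{P_{\chi _k}}$.

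For part (a), I apply the formula twice: first $L_\bscript (A) = \sum _j \elbows{\psi _j, A\psi _j} P_{\psi _j}$, and then by linearity
\begin{equation*}
L_\ascript\sqbrac{L_\bscript (A)} = \sum _{i,j} \ab{\elbows{\phi _i, \psi _j}}^2 \elbows{\psi _j, A\psi _j} P_{\phi _i}.
\end{equation*}
Mutual unbiasedness forces each coefficient $\ab{\elbows{\phi _i, \psi _j}}^2$ to equal $1/n$, which factors out of the double sum. The remaining inner sum $\sum _j \elbows{\psi _j, A\psi _j}$ equals $\rmtr (A)$ since $\brac{\psi _j}$ is an orthonormal basis, while $\sum _i P_{\phi _i} = I$, so the result is $\rmtr (A)I/n$. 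The computation of $L_\bscript\sqbrac{L_\ascript (A)}$ is identical with the roles of the two bases reversed, giving the same value.

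For part (b), I specialize the hypothesis $L_\ascript\sqbrac{L_\bscript (P)} = I/n$ to the rank-one projections $P = P_{\psi _k}$. Since $P_{\psi _k}$ commutes with every element of $\bscript$ and $\bscript$ is sharp, Theorem~\ref{thm11}(a) gives $L_\bscript (P_{\psi _k}) = P_{\psi _k}$, and the hypothesis reduces to
\begin{equation*}
\sum _i \ab{\elbows{\phi _i, \psi _k}}^2 P_{\phi _i} = \tfrac{1}{n} \sum _i P_{\phi _i}.
\end{equation*}
Because the projections $P_{\phi _i}$ are pairwise orthogonal they are linearly independent, so comparing coefficients (equivalently, multiplying both sides by $P_{\phi _i}$) yields $\ab{\elbows{\phi _i, \psi _k}}^2 = 1/n$ for every $i$ and every $k$, which is precisely the condition that $\brac{\phi _i}$ and $\brac{\psi _k}$ are mutually unbiased.

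I do not anticipate any genuine obstacle: both parts reduce to short direct manipulations using the context-map formula. An incidental remark is that imposing the hypothesis of (b) only on the rank-one projections arising from $\bscript$ already suffices to force mutual unbiasedness, so the assumption across all of $\pscript _1(H)$ is stronger than strictly needed.
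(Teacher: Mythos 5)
Your proof is correct and follows essentially the same route as the paper: part (a) is the same double-sum computation with the context-map formula, and part (b) makes the same key specialization to $P=P_{\psi_k}$ (the paper plugs $P=Q_r$ into the double sum directly, while you first collapse $L_\bscript(P_{\psi_k})=P_{\psi_k}$ via measurability, which is an equivalent shortcut). Your closing remark that the hypothesis of (b) is only needed on the projections coming from $\bscript$ is accurate and is implicitly how the paper's proof works as well.
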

\begin{proof}
(a)\enspace This follows from
\begin{align*}
L_\ascript\sqbrac{L_\bscript (A)}&=\sum _{i,j}P_jQ_iAQ_iP_j=\sum _{i,j}\ab{\elbows{\phi _i,\psi _j}}^2\elbows{\psi _i,A\psi _i}P_i\\
   &=\tfrac{1}{n}\sum _{i,j}\elbows{\psi _i,A\psi _i}P_j=\rmtr (A)I/n
\end{align*}
(b)\enspace Assume that $L_\ascript\sqbrac{L_\bscript (P)}=I/n$ for all $P\in\pscript _1(H)$. As in the proof of (a) we have
\begin{equation*}
\sum _{i,j}\ab{\elbows{\phi _j,\psi _i}}^2\elbows{\psi _i,P\psi _i}P_j=I/n
\end{equation*}
Operating on $\phi _k$ gives
\begin{equation*}
\sum _i\ab{\elbows{\phi _k,\psi _i}}^2\elbows{\psi _i,P\psi _i}\phi _k=\tfrac{1}{n}\phi _k
\end{equation*}
Letting $P=Q_r$ gives $\ab{\elbows{\phi _k,\psi _r}}^2=1/n$ for all $r,k=1,2,\ldots ,n$. Hence, $\ascript$, $\bscript$ are mutually unbiased.
\end{proof}

We call the map $R\colon\bscript (H)\to\bscript (H)$ given by $R(A)=\rmtr (A)I/n$ the \textit{completely random channel map}. Notice that for
$\rho\in\dscript (H)$ we have $R(\rho )=I/n$ which is called the \textit{completely random state}.

\begin{cor}    
\label{cor33}
The following statements are equivalent for contexts $\ascript =\brac{P_i}$, $\bscript =\brac{Q_i}$.
{\rm (a)}\enspace $\ascript$, $\bscript$ are mutually unbiased.
{\rm (b)}\enspace $L_\ascript L_\bscript =R$.
{\rm (c)}\enspace $\sum _iP_iQ_kP_i=I/n$, $k=1,2,\ldots ,n$.
{\rm (d)}\enspace $P_jQ_kP_j=P_j/n$, $j,k=1,2,\ldots ,n$.
\end{cor}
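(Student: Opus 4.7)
The plan is to prove the four-way equivalence via the cycle (a) $\Rightarrow$ (b) $\Rightarrow$ (c) $\Rightarrow$ (d) $\Rightarrow$ (a), pulling most of the heavy lifting from Theorem~\ref{thm32}.

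First, (a) $\Rightarrow$ (b) is immediate from Theorem~\ref{thm32}(a) applied to an arbitrary $A\in\bscript(H)$, since $R(A)=\rmtr(A)I/n$ by definition.

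For (b) $\Rightarrow$ (c), I would evaluate both sides at $A=Q_k$. Since $Q_k\in\bscript$ and $\bscript$ is sharp, $Q_k$ is measurable with respect to $\bscript$, so Theorem~\ref{thm11}(a) gives $L_\bscript(Q_k)=Q_k$. Then $L_\ascript[L_\bscript(Q_k)]=L_\ascript(Q_k)=\sum_i P_iQ_kP_i$, while $R(Q_k)=\rmtr(Q_k)I/n=I/n$. This yields (c).

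For (c) $\Rightarrow$ (d), I would sandwich the identity $\sum_i P_iQ_kP_i=I/n$ between $P_j$ on the left and $P_j$ on the right. Because $P_jP_i=\delta_{ij}P_j$ (the projections in a sharp channel are mutually orthogonal), the sum collapses to $P_jQ_kP_j$, giving $P_jQ_kP_j=P_j/n$.

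Finally, for (d) $\Rightarrow$ (a), I would write $P_j=\ket{\phi_j}\bra{\phi_j}$ and $Q_k=\ket{\psi_k}\bra{\psi_k}$ for the associated orthonormal bases. A direct computation gives $P_jQ_kP_j=\ab{\elbows{\phi_j,\psi_k}}^2P_j$, and matching this with $P_j/n$ (which is nonzero) forces $\ab{\elbows{\phi_j,\psi_k}}^2=1/n$ for all $j,k$. This is precisely the definition of mutual unbiasedness of the two bases, hence of the two contexts.

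There is no serious obstacle here; the only subtlety is remembering at step (b) $\Rightarrow$ (c) that $L_\bscript$ fixes each of its own branches, which is exactly Theorem~\ref{thm11}(a). The rest is a short chain of easy manipulations using orthogonality of the $P_i$'s and the rank-one form of the projections.
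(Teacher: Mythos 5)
Your proof is correct and follows essentially the same route as the paper: the paper likewise gets (a)$\Leftrightarrow$(b) from Theorem~\ref{thm32}, derives (c) by evaluating at $Q_k$, obtains (d) by multiplying by $P_j$, and closes the loop with exactly your computation $P_jQ_kP_j=\ab{\elbows{\phi_j,\psi_k}}^2P_j$ (which it offers as an alternative to a direct (d)$\Rightarrow$(b) argument). Your single-cycle organization is a harmless repackaging that only needs Theorem~\ref{thm32}(a) rather than both parts.
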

\begin{proof}
That (a) and (b) are equivalent follows from Theorem~\ref{thm32}. If (b) holds, then
\begin{equation*}
\sum _iP_iQ_kP_i=L_\ascript\sqbrac{L_\bscript (Q_k)}=R(Q_k)=I/n
\end{equation*}
Hence, (b) implies (c). If (c) holds, then multiplying by $P_j$ gives $P_jQ_kP_j=P_j/n$ so (c) implies (d). If (d) holds, then clearly (c) holds and we have that
\begin{align*}
L_\ascript\sqbrac{L_\bscript (A)}&=L_\ascript\paren{\sum _j\elbows{\psi _j,A\psi _j}Q_j}=\sum _j\elbows{\psi _j,A\psi _j}\sum _iP_iQ_jP_i\\
  &=\sum _j\elbows{\psi _j,A\psi _j}I/n=\rmtr (A)I/n=R(A)
\end{align*}
Hence, (d) implies (b). Another way to show this is that (d) implies that
\begin{equation*}
\ab{\elbows{\phi _j,\psi _k}}^2P_j=P_jQ_kP_j=P_j/n
\end{equation*}
so we have $\ab{\elbows{\phi _j,\psi _k}}^2=1/n$. Hence, (d) implies (a).
\end{proof}

If $\ascript =\brac{P_{\phi _i}}$, $\bscript =\brac{P_{\psi _i}}$ are mutually unbiased contexts, then we have seen that the channel
\begin{equation*}
\ascript\bscript =\brac{P_{\phi _i}P_{\psi _j}}=\brac{\elbows{\phi _i,\psi _j}\ket{\psi _j}\bra{\phi _i}}
\end{equation*}
satisfies $L_{\ascript\bscript}(A)=L_\ascript L_\bscript (A)=R(A)$ for all $A\in\bscript (H)$. We call the related channel
$\rscript =\brac{\tfrac{1}{\sqrt{n}}\ket{\psi _j}\bra{\phi _i}}$ a \textit{completely random channel}. It is easy to check that $\rscript$ and
$\ascript\bscript$ are equivalent in the sense that $L_\rscript (A)=R(A)$ for every $A\in\bscript (H)$.

\begin{lem}    
\label{lem34}
{\rm (a)}\enspace The map $R\colon\bscript (H)\to\bscript (H)$ is the unique channel map satisfying $RL_\ascript =L_\ascript R=R$ for every channel $\ascript$.
{\rm (b)}\enspace An operator $A\in\bscript (H)$ is unbiased in a context $\ascript$ if and only if $L_\ascript (A)=R(A)$.
\end{lem}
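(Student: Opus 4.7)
My approach for part (a) is to first verify that $R$ itself satisfies the stated intertwining relations, then establish uniqueness by testing against a single well-chosen channel. For the intertwining: every channel map $L_\ascript$ is trace-preserving and unital (both $L_\ascript(I) = I$ and the trace identity are recorded in Section~1), so $RL_\ascript(A) = \rmtr\sqbrac{L_\ascript(A)}I/n = \rmtr(A)I/n = R(A)$, while $L_\ascript R(A) = (\rmtr(A)/n)L_\ascript(I) = \rmtr(A)I/n = R(A)$. Both checks are immediate.

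For uniqueness, I would suppose $L_\cscript$ is a channel map with $L_\cscript L_\ascript = L_\ascript L_\cscript = L_\cscript$ for every channel $\ascript$, and then specialize $\ascript$ to the completely random channel $\rscript$ introduced just before the lemma (which satisfies $L_\rscript = R$). From $L_\cscript L_\rscript = L_\cscript$ one obtains $L_\cscript\paren{R(A)} = L_\cscript(A)$; expanding $R(A) = \rmtr(A)I/n$ and using $L_\cscript(I) = I$ then yields $L_\cscript(A) = (\rmtr(A)/n)I = R(A)$, as required.

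For part (b), the key tool is Theorem~\ref{thm11}(b)(i), which gives $L_\ascript(A) = \sum_i\elbows{\phi_i,A\phi_i}P_{\phi_i}$. If $A$ is unbiased in $\ascript$, then by the trace computation in the paragraph preceding Theorem~\ref{thm32}, every diagonal entry equals $\rmtr(A)/n$, so $L_\ascript(A) = (\rmtr(A)/n)\sum_i P_{\phi_i} = (\rmtr(A)/n)I = R(A)$. Conversely, if $L_\ascript(A) = R(A)$, then $\sum_i\elbows{\phi_i,A\phi_i}P_{\phi_i} = (\rmtr(A)/n)\sum_iP_{\phi_i}$; multiplying on the right by $P_{\phi_j}$ and using orthogonality of the $P_{\phi_i}$ isolates $\elbows{\phi_j,A\phi_j} = \rmtr(A)/n$ for each $j$, so all diagonal entries agree and $A$ is unbiased in $\ascript$.

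The proof is essentially routine trace manipulation plus a single invocation of Theorem~\ref{thm11}(b)(i), so there is no serious obstacle. The only conceptual step is recognizing in part (a) that $\rscript$ is exactly the right test channel: applying the intertwining relation to $\rscript$ collapses $L_\cscript$ onto a scalar multiple of $L_\cscript(I)=I$, which immediately forces $L_\cscript = R$.
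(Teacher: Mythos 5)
Your proof is correct and follows essentially the same route as the paper's: the same direct verification that $RL_\ascript=L_\ascript R=R$ using trace-preservation and unitality, and the same diagonal computation $\sum\elbows{\phi_i,A\phi_i}P_{\phi_i}=\rmtr(A)I/n$ for part (b). The only cosmetic difference is the uniqueness step, where the paper argues abstractly that two such maps satisfy $R_1=R_1R_2=R_2$, while you specialize the defining relation to the concrete channel $\rscript$ with $L_\rscript=R$ and unwind --- the same composition trick, made explicit.
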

\begin{proof}
(a)\enspace If $\ascript$ is a channel, we have for every $A\in\bscript (H)$ that
\begin{align*} 
R\sqbrac{L_\ascript (A)}&=\rmtr\sqbrac{L_\ascript (A)}I/n=\rmtr (A)I/n=R(A)\\
\intertext{and}
L_\ascript\sqbrac{R(A)}&=L_\ascript\sqbrac{\rmtr (A)I/n}=\tfrac{\rmtr (A)}{n}L_\ascript (I)=\rmtr (A)I/n=R(A)
\end{align*}
For uniqueness if $R_1$ and $R_2$ satisfy the given equation, then
\begin{equation*} 
R_1=R_1R_2=R_2R_1=R_2
\end{equation*}
(b)\enspace Let $\ascript =\brac{P_{\phi _i}}$ be a context. If $A\in\bscript (H)$ is unbiased in $\ascript$, then
$\elbows{\phi _i,A\phi _i}=\rmtr (A)/n$, $i=1,2,\ldots ,n$. Hence,
\begin{equation*} 
L_\ascript (A)=\sum\elbows{\phi _i,A\phi _i}P_i=\tfrac{\rmtr (A)}{n}I=R(A)
\end{equation*}
Conversely, suppose that $L_\ascript (A)=R(A)$. We then obtain
\begin{equation*} 
\sum\elbows{\phi _i,A\phi _i}P_i=\rmtr (A)I/n
\end{equation*}
But then $\elbows{\phi _i,A\phi _i}P_i=\tfrac{\rmtr (A)}{n}P_i$. Hence, $\elbows{\phi _i,A\phi _i}=\rmtr (A)/n$, $i=1,2,\ldots ,n$ so $A$ is unbiased in $\ascript$.
\end{proof}

Motivated by Lemma~\ref{lem34}(b), we say that $A\in\bscript (H)$ is \textit{unbiased} in a channel $\ascript$ if $L_\ascript (A)=R(A)$. In this case, if $\ascript =\brac{M_i}$ we have that
\begin{equation}                
\label{eq32}
\sum M_i^*AM_i=\rmtr (A)I/n
\end{equation}
We see that a sufficient condition for \eqref{eq32} to hold is that
\begin{equation*} 
M_i^*AM_i=\rmtr (A)M_i^*M_i/n,\qquad i=1,2,\ldots ,n
\end{equation*}
There are examples which show that the converse does not hold.

\begin{thm}    
\label{thm35}
Let $\ascript =\brac{P_i}\in\rmsh (H)$ and let $\phi _{i,j}$ be an orthonormal basis for the range of $P_i$.
{\rm (a)}\enspace For $A\in\bscript (H)$ we have that
\begin{equation*} 
L_\ascript (A)=\sum _{i,j,k}\elbows{\phi _{i,j},A\phi _{i,j}}\ket{\phi _{i,j}}\bra{\phi _{i,k}}
\end{equation*}
{\rm (b)}\enspace $A$ is unbiased in $\ascript$ if and only if
\begin{equation}                
\label{eq33}
\elbows{\phi _{i,j},A\phi _{i,k}}=\tfrac{\rmtr (A)}{n}\delta _{j,k},\qquad i=1,2,\ldots ,n
\end{equation}
\end{thm}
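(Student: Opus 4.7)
For part (a), the plan is to exploit sharpness to collapse the defining sum for $L_\ascript$. Because $\ascript = \brac{P_i} \in \rmsh(H)$, the branches are mutually orthogonal projections, so $P_i^* = P_i$ and $L_\ascript(A) = \sum_i P_i A P_i$. Substituting the spectral expansion $P_i = \sum_j \ket{\phi_{i,j}}\bra{\phi_{i,j}}$ into each block $P_i A P_i$ and multiplying out yields
\begin{equation*}
P_i A P_i = \sum_{j,k} \elbows{\phi_{i,j}, A \phi_{i,k}} \ket{\phi_{i,j}}\bra{\phi_{i,k}},
\end{equation*}
and summing over $i$ produces the displayed formula.

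For part (b), I would combine (a) with the definition that $A$ is unbiased in $\ascript$ precisely when $L_\ascript(A) = R(A) = \rmtr(A) I / n$. For the forward direction, I sandwich $L_\ascript(A) = \rmtr(A)I/n$ by a fixed $P_\ell$ on both sides; since $P_\ell P_i = \delta_{i\ell} P_\ell$ and $P_\ell I P_\ell = P_\ell$, this isolates the $\ell$-th block as $P_\ell A P_\ell = \tfrac{\rmtr(A)}{n} P_\ell$. Taking the matrix element between $\phi_{\ell,j}$ and $\phi_{\ell,k}$, and using $P_\ell \phi_{\ell,m} = \phi_{\ell,m}$, recovers \eqref{eq33}. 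Conversely, inserting \eqref{eq33} into the formula from (a) gives
\begin{equation*}
L_\ascript(A) = \tfrac{\rmtr(A)}{n}\sum_{i,j} \ket{\phi_{i,j}}\bra{\phi_{i,j}} = \tfrac{\rmtr(A)}{n} \sum_i P_i = \tfrac{\rmtr(A)}{n} I = R(A),
\end{equation*}
so $A$ is unbiased in $\ascript$.

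There is no serious obstacle here; the argument is routine once one uses the collapse $L_\ascript(A) = \sum_i P_i A P_i$ afforded by sharpness and the basis expansion of each range projection. The only item worth flagging is conceptual rather than technical: the unbiasedness condition forces the $i$-th diagonal block of $A$, when read in the basis $\phi_{i,1},\phi_{i,2},\ldots$ within the range of $P_i$, to equal $\tfrac{\rmtr(A)}{n}$ times the identity on that range, with the \emph{same} scalar $\rmtr(A)/n$ for every block $i$, and it places no constraint at all on the off-diagonal blocks $P_i A P_\ell$ for $i \ne \ell$.
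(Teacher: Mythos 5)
Your proof is correct and takes essentially the same route as the paper's: part (a) by expanding $P_i=\sum_j\ket{\phi_{i,j}}\bra{\phi_{i,j}}$ inside $L_\ascript(A)=\sum_iP_iAP_i$, and part (b) by isolating each block to get $P_iAP_i=\rmtr(A)P_i/n$ in the forward direction and resumming \eqref{eq33} for the converse. One remark: the formula you derive in (a), with matrix element $\elbows{\phi_{i,j},A\phi_{i,k}}$, is the correct one --- the displayed statement of the theorem has a typo ($\phi_{i,j}$ where $\phi_{i,k}$ should appear), which the paper's own proof silently corrects as well.
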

\begin{proof}
(a)\enspace Since $P_i=\sum _j\ket{\phi _{i,j}}\bra{\phi _{i,j}}$ we have that 
\begin{align*} 
L_\ascript (A)&=\sum _{i,j}P_{\phi _{i,j}}A\sum _kP_{\phi _{i,k}}=\sum _{i,j,k}P_{\phi _{i,j}}AP_{\phi _{i,k}}\\
   &=\sum _{i,j,k}\elbows{\phi _{i,j},A\phi _{i,k}}\ket{\phi _{i,j}}\bra{\phi _{i,k}}
\end{align*}
(b)\enspace If $A$ is unbiased in $\ascript$, then $\sum P_iAP_i=\rmtr (A)I/n$. Hence, $P_iAP_i=\rmtr (A)P_i/n$ and we have that
\begin{equation*}
\elbows{\phi _{i,j},A\phi _{i,k}}=\tfrac{\rmtr (A)}{n}\elbows{\phi _{i,j},\phi _{i,k}}=\tfrac{\rmtr (A)}{n}\delta _{j,k}
\end{equation*}
Conversely, suppose that \eqref{eq33} holds. We then have that
\begin{align*} 
P_iAP_i&=\sum _{j,k}\elbows{\phi _{i,j},A\phi _{i,k}}\ket{\phi _{i,j}}\bra{\phi _{i,k}}\\
   &=\tfrac{\rmtr (A)}{n}\sum _j\ket{\phi _{i,j}}\bra{\phi _{i,j}}=\rmtr (A)P_i/n
\end{align*}
Hence,
\begin{equation*} 
L_\ascript (A)=\sum P_iAP_i=\tfrac{\rmtr (A)}{n}\sum P_i=\tfrac{\rmtr (A)}{n}I
\end{equation*}
We conclude that $A$ is unbiased in $\ascript$.
\end{proof}

We say that $A\in\bscript (H)$ is \textit{strongly unbiased} in a channel $\ascript$ if $A^m$ is unbiased in $\ascript$, $m=1,2,\ldots\,$. The importance of strongly unbiased is the following. If $A$ is non-degenerate, self-adjoint and strongly unbiased in a context $\ascript$, then the eigenvectors of $A$ are mutually unbiased in $\ascript$. The converse also holds. Of course, if a projection is unbiased in $\ascript$, then it is strongly unbiased in $\ascript$. If $\ascript$ and $\bscript$ are mutually unbiased contexts, then any operator that is measurable with respect to
$\ascript$ is strongly unbiased in $\bscript$. As with unbiased operators, sets of strongly unbiased operators are closed under scalar multiplication, addition and taking adjoints. The next example shows that the product of unbiased operators need not be unbiased.

\begin{exam}{3}  
Let $\ascript$ be the context corresponding to the standard basis on $\complex ^2$. Representing operators by matrices in this basis, let
\begin{equation*}
A=\begin{bmatrix}0&1\\\noalign{\smallskip}1&0\end{bmatrix},\quad
B=\begin{bmatrix}0&i\\\noalign{\smallskip}-i&0\end{bmatrix}
\end{equation*}
Then $A$ and $B$ are self-adjoint and unbiased in $\ascript$, but
\begin{equation*}
AB=\begin{bmatrix}-i&0\\\noalign{\smallskip}0&i\end{bmatrix}
\end{equation*}
is not unbiased in $\ascript$.\quad\qedsymbol
\end{exam}

\begin{exam}{4}  
In $\complex ^3$, let $\ascript$ be the context corresponding to the standard basis and let
\begin{equation*}
A=\begin{bmatrix}1&0&1\\\noalign{\smallskip}0&1&0\\\noalign{\smallskip}1&0&1\end{bmatrix}
\end{equation*}
Then $A$ is self-adjoint and unbiased in $\ascript$. However, $A$ is not strongly unbiased because
\begin{equation*}
A^2=\begin{bmatrix}2&0&2\\\noalign{\smallskip}0&1&0\\\noalign{\smallskip}2&0&2\end{bmatrix}\qquad\qedsymbol
\end{equation*}
\end{exam}

\begin{exam}{5}  
The following is a general unbiased operator in the standard basis for $\complex ^2$:
\begin{equation*}
A=\begin{bmatrix}a&b\\\noalign{\smallskip}c&a\end{bmatrix}
\end{equation*}
We have that
\begin{equation*}
A^2=\begin{bmatrix}a^2+bc&2ab\\\noalign{\smallskip}2ac&a^2+bc\end{bmatrix}
\end{equation*}
so $A^2$ is again unbiased.\quad\qedsymbol
\end{exam}

The previous example suggests that unbiased operators in $\complex ^2$ are always strongly unbiased and this is our last result.

\begin{thm}    
\label{thm36}
If $A\in\bscript (\complex ^2)$ is unbiased in a context $\ascript$, then it is strongly unbiased in $\ascript$.
\end{thm}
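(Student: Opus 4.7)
The plan is to prove Theorem~\ref{thm36} in essentially one line by appealing to the Cayley--Hamilton theorem in dimension two, combined with the closure properties of the set of unbiased operators that were recorded just before Example~3.

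First I would verify two easy preliminaries. (i)~The identity $I$ is unbiased in \emph{every} context $\ascript = \{P_{\phi_i}\}$ of $\complex^2$, since $\elbows{\phi_i, I\phi_i} = 1 = \rmtr(I)/2$ for $i = 1,2$. (ii)~The set of operators unbiased in $\ascript$ is closed under scalar multiplication and addition: both facts follow at once from $\elbows{\phi_i, (cA+dB)\phi_i} = c\elbows{\phi_i, A\phi_i} + d\elbows{\phi_i, B\phi_i}$ together with the linearity of the trace. The paper notes this closure just before Example~3, so I would simply cite it. Consequently, any linear combination of $A$ (unbiased by hypothesis) and $I$ (unbiased trivially) is unbiased in $\ascript$.

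Second, the key observation is that in $\complex^2$ the Cayley--Hamilton theorem gives
\begin{equation*}
A^2 = \rmtr(A)\,A - \det(A)\,I.
\end{equation*}
By a straightforward induction on $m$, every power $A^m$ lies in the two-dimensional span of $\{A, I\}$: writing $A^m = \alpha_m A + \beta_m I$, one gets $A^{m+1} = \alpha_m A^2 + \beta_m A = \bigl(\alpha_m \rmtr(A) + \beta_m\bigr) A - \alpha_m \det(A)\, I$, so the recursion $\alpha_{m+1} = \alpha_m \rmtr(A) + \beta_m$, $\beta_{m+1} = -\alpha_m \det(A)$ keeps $A^m$ inside $\mathrm{span}\{A,I\}$ for all $m \geq 1$.

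Combining the two steps, each $A^m$ is a linear combination of unbiased operators and is therefore itself unbiased in $\ascript$, so $A$ is strongly unbiased by definition. There is no real obstacle here: the argument is purely algebraic and exploits the very low dimension, which is exactly why Theorem~\ref{thm36} is special to $\complex^2$ (in dimension $n \geq 3$, Cayley--Hamilton no longer forces $A^m$ into $\mathrm{span}\{A, I\}$, which is consistent with the counterexample in Example~4).
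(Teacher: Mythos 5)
Your proof is correct, but it takes a genuinely different route from the paper's. The paper reduces to the standard basis by a unitary conjugation, writes out the entries of $A^n$ explicitly, and runs a strong induction on $n$ to show that the two diagonal entries of $A^{n+1}$ agree, using the intermediate identity $a_{12}a_{21}^{n}=a_{21}a_{12}^{n}$ extracted from the equality of the diagonal entries at earlier stages. You instead observe that the unbiased operators in a fixed context form a linear subspace containing $I$, and that Cayley--Hamilton in dimension two traps every power $A^m$ inside $\mathrm{span}\{A,I\}$; strong unbiasedness then falls out immediately. Your argument is shorter, basis-free (so the unitary reduction step is unnecessary), and makes transparent \emph{why} the result is special to $\complex^2$: for $n\ge 3$ Cayley--Hamilton only confines $A^m$ to $\mathrm{span}\{I,A,\ldots,A^{n-1}\}$, and the higher powers appearing there need not be unbiased, exactly as Example~4 illustrates. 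The paper's computational induction buys nothing extra here beyond being self-contained at the level of matrix arithmetic; your version is the more conceptual proof and, if anything, clarifies the theorem's scope.
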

\begin{proof}
We can transform $\ascript$ to the context corresponding to the standard basis by a unitary transformation $U$. This transforms $A$ to
$UAU^*$. Since $A$ is unbiased in $\ascript$, $UAU^*$ will be unbiased in the standard basis. Thus, if we prove this result for matrices it will hold in general. Proceeding in this way, we shall show by induction on $n$ that if $A$ is a matrix that is unbiased in the standard basis for
$\complex ^2$, then $A^n$ is also. Letting
\begin{equation*}
A=\begin{bmatrix}a_{11}&a_{12}\\\noalign{\smallskip}a_{21}&a_{11}\end{bmatrix}
\end{equation*}
We use the notation
\begin{equation*}
A^n=\begin{bmatrix}a_{11}^n&a_{12}^n\\\noalign{\smallskip}a_{21}^n&a_{22}^n\end{bmatrix}
\end{equation*}
We know that the result holds for $n=1,2$. Suppose the result holds for $j=1,2,\ldots ,n$ so that $a_{11}^j=a_{22}^j$, $j=1,2,\ldots ,n$. We then obtain
\begin{align*} 
A^n&=\begin{bmatrix}a_{11}^n&a_{12}^n\\\noalign{\smallskip}a_{21}^n&a_{11}^n\end{bmatrix}
   =\begin{bmatrix}a_{11}&a_{12}\\\noalign{\smallskip}a_{22}&a_{11}\end{bmatrix}
  \begin{bmatrix}a_{11}^{n-1}&a_{12}^{n-1}\\\noalign{\smallskip}a_{21}^{n-1}&a_{11}^{n-1}\end{bmatrix}\\
   &=\begin{bmatrix}a_{11}a_{11}^{n-1}+a_{12}a_{21}^{n-1}&a_{11}a_{12}^{n-1}+a_{12}a_{11}^{n-1}\\
   \noalign{\smallskip}a_{12}a_{11}^{n-1}+a_{11}a_{21}^{n-1}&a_{21}a_{12}^{n-1}+a_{11}a_{11}^{n-1}\end{bmatrix}
\end{align*}
Since
\begin{equation*} 
a_{11}a_{11}^{n-1}+a_{12}a_{21}^{n-1}=a_{21}a_{12}^{n-1}+a_{11}a_{11}^{n-1}
\end{equation*}
we have that $a_{12}a_{21}^{n-1}=a_{21}a_{12}^{n-1}$. Therefore,
\begin{equation*} 
a_{12}a_{21}^n=a_{12}a_{21}a_{11}^{n-1}+a_{12}a_{11}a_{21}^{n-1}=a_{12}a_{21}a_{11}^{n-1}+a_{21}a_{11}a_{12}^{n-1}\\
   =a_{21}a_{12}^n
\end{equation*}
We then have that
\begin{align*} 
A^{n+1}&=\begin{bmatrix}a_{11}^{n+1}&a_{12}^{n+1}\\\noalign{\smallskip}a_{21}^{n+1}&a_{22}^{n+1}\end{bmatrix}
   =\begin{bmatrix}a_{11}&a_{12}\\\noalign{\smallskip}a_{21}&a_{11}\end{bmatrix}
  \begin{bmatrix}a_{11}^n&a_{12}^n\\\noalign{\smallskip}a_{21}^n&a_{11}^n\end{bmatrix}\\
   &=\begin{bmatrix}a_{11}a_{11}^n+a_{12}a_{21}^n&a_{11}a_{12}^n+a_{12}a_{11}^n\\\noalign{\smallskip}
   a_{21}a_{11}^n+a_{11}a_{21}^n&a_{21}a_{12}^n+a_{11}a_{11}^n\end{bmatrix}
\end{align*}
We conclude that
\begin{equation*}
a_{11}^{n+1}=a_{11}a_{11}^n+a_{12}a_{21}^n=a_{11}a_{11}^n+a_{21}a_{12}^n=a_{22}^{n+1}
\end{equation*}
Hence, $A^{n+1}$ is unbiased so the result holds by induction.
\end{proof}

\end{document}